\documentclass[11pt]{article}

\usepackage{amsfonts}
\usepackage{graphicx}
\usepackage{epstopdf}
\usepackage{algorithmic}


%

\usepackage{amsopn}

\usepackage{graphicx} 
\usepackage{fullpage}  
\usepackage{amsfonts}
\usepackage{amssymb}
\usepackage{mathtools}
\usepackage{enumerate}
\usepackage{caption}  
\usepackage{subcaption}
\usepackage{hyperref}       
\usepackage{url}            

\usepackage{amsthm,amscd}

\usepackage{tikz}
\usetikzlibrary{arrows,positioning,shapes,3d,calc}
\usepackage{pgfplots}
\usepgfplotslibrary{groupplots}

\pgfplotsset{compat = newest,
		plot coordinates/math parser = false}
\newlength\figureheight
\newlength\figurewidth

\DeclareMathOperator*{\argmin}{arg\,min}
\DeclareMathOperator*{\rank}{rank}
\DeclareMathOperator*{\Diag}{Diag}
\DeclareMathOperator*{\Image}{Im}

\DeclareMathOperator{\diag}{diag}

\newtheorem{definition}{Definition}[section]
\newtheorem{theorem}{Theorem}[section]

\newtheorem{corollary}{Corollary}[section]

\newtheorem{remark}{Remark}[section]

\numberwithin{equation}{section}

\begin{document}

\title{Analysis of the Neighborhood Pattern Similarity Measure for the Role Extraction Problem}

\author{Melissa Marchand\footnotemark[1] \and Kyle Gallivan\footnotemark[2]
\and Wen Huang\footnotemark[3]
\and Paul Van Dooren\footnotemark[4]}

\maketitle

\begin{abstract}
 In this paper we analyze an indirect approach, called the  Neighborhood Pattern Similarity approach, to solve the so-called role extraction problem of a large-scale graph. The method is based on the preliminary construction of a node similarity matrix which allows in a second stage to group together, with an appropriate clustering technique, the nodes that are assigned to have the same role. The analysis builds on the notion of ideal graphs where all nodes with the same role, are also structurally equivalent. 	
\end{abstract}


\footnotetext[1]{Department of Mathematics, Florida State University, 208 Love Building, 1017 Academic Way, Tallahassee, FL 32306-4510, USA.
	\url{melissa.s.marchand@gmail.com}.}
\footnotetext[2]{Department of Mathematics, Florida State University, 208 Love Building, 1017 Academic Way, Tallahassee, FL 32306-4510, USA.
	\url{gallivan@math.fsu.edu}.}
\footnotetext[3]{School of Mathematical Sciences, Fujian Provincial Key Laboratory of Mathematical Modeling and High-Performance Scientific Computing,
        Xiamen University, Xiamen, Fujian, P.R.China, 361005. 
        \url{wen.huang@xmu.edu.cn}.}
\footnotetext[4]{Department of Mathematical Engineering, Universit\'e catholique de Louvain, Louvain-La-Neuve, Belgium.
	\url{paul.vandooren@uclouvain.be}.}



\section{Introduction}
To analyze large networks and obtain relevant statistical properties, clustering nodes together into subgroups of densely connected nodes, called communities, is a popular approach. Various measures and algorithms have been developed to identify these community structures \cite{POM:2009, Fortunato:2010, Traag:2014}. However, there are network structures that cannot be determined using community detection algorithms, such as bipartite and cyclic graph structures, which appear in human protein-protein interaction networks \cite{PSR:2010} and food web networks \cite{GSSPLNA:2010}, respectively. General types of network structures are known as role structures, and the process of finding them is called the role extraction problem, or block modeling.

The role extraction problem determines a representation of a network by a smaller structured graph, called the reduced graph, role graph, or image graph, where nodes are grouped together into roles based upon their interactions with nodes in either the same role or different roles. This problem is a generalization of the community detection problem where each node in a community mainly interacts with other nodes within the same community and there are no, or very few, interactions between communities.  There are many real world applications to which role extraction can be applied and from which characterizations of  interactions that define roles can be taken, such as studying trade networks between countries \cite{RW:2007}; evaluating the resilience of peer-to-peer networks \cite{HKYH:2002}; ranking web pages in search engines \cite{PBMW:1999}; studying human interaction by email correspondence \cite{AMC:2010}; modeling protein-protein interactions \cite{KKKR:2002}; and analyzing food webs \cite{GSSPLNA:2010}.

Previous research solved the role extraction problem using either direct or indirect approaches, where direct approaches cluster the network directly into roles~\cite{DBF:2005,RW:2007,Reichardt:2009}, while indirect approaches construct a  node similarity matrix of the data set and then cluster highly similar nodes together~\cite{BGHSVD:2004,LHN:2006,CB:2011,Cason:2012,BDVB:2013,CLVD:2016,Marchand:2016,Marchand:2017}. Both approaches have strengths and weakness for solving the role extraction problem.  A strength of direct approaches is that it explicitly fits the data into a role structure. Unfortunately, there is  no well-accepted measure to determine whether or not a role assignment fits the data, so a priori knowledge about the network is necessary or multiple role assignments must be tested to determine the best role structure for the data~\cite{DBF:2005}.  

Indirect approaches do not require an assumption on the role assignment and may reveal complex network structures that the original data may not reveal. The main problem with indirect approaches is that there exist several different types of node  similarity measures. In addition, many of these measure have been deemed unsuitable for the role extraction problem due to difficulties encountered when extracting role structures from certain types of graphs (e.g., regular graphs and normal graphs), loss of information (e.g., the origin, the destination, and the intermediate nodes involved in the transmission of the flow), or were more suited to detect community structures than role structures~\cite{Browet:2014}. Fortunately, recent work has shown that the neighborhood pattern similarity measure can be used to solve the role extraction problem when using the indirect approach.

Browet and Van Dooren used the neighborhood pattern similarity measure to solve the role extraction problem and showed empirically that the measure was able to determine the role structure of complex networks \cite{Browet:2014}. In addition, they developed an algorithm to compute a low-rank approximation of the similarity matrix  and showed empirically that their indirect approach can extract role structures within networks. Marchand improved upon their low-rank algorithm using Riemannian optimization techniques to develop a more efficient algorithm to compute the low-rank similarity matrix and showed (analytically and empirically) that there exists a relationship between the rank of the similarity matrix and the number of roles in the network \cite{Marchand:2017}.

In this paper, we analyze the neighborhood pattern similarity measure and show that, under certain assumptions, we can recover roles from a low-rank factorization of the similarity matrix due to the relationship between the rank of the similarity matrix and the number of roles. Also, we explore how perturbing the adjacency matrix affects the singular values (and rank) of the similarity matrix. Lastly, we unify special complex structures in networks (e.g., community, overlapping community, etc.) as role structures and show that the neighborhood pattern similarity measure can be used as well to find these structures in network topology.

\section{Role Extraction Problem}
Given a (un)weighted and directed network, the role extraction problem represents the network by its adjacency matrix  and determines a representative role structure for the network. This role is determined by assuming that nodes can be grouped according to a suitable measure of equivalence. In this section, we state two measures of equivalence used for the role extraction problem and define the general form and state the constraints necessary to extract viable role structures. Most of the discussion and results that follow concern unweighted directed graphs. In Section~\ref{sec:complex}, the expression of special graph structures as role structures includes generalizations to signed weighted directed graphs.

\subsection{Measure of Equivalence and Definition of the Role Extraction Problem}
A graph, denoted $G(V, E)$, is a mathematical structure with two finite sets $V$ and $E$, where the elements of the set $V = \{1, \ldots, n\}$ are called nodes and the elements of the set $E=\{(i, j)~\vert~i, j\in V\}$ are called edges. If there exists an edge between nodes $i$ and $j$, i.e., the pair $(i, j) \in E$, then nodes $i$ and $j$ are adjacent. The adjacency matrix is an $n\times n$ $\{0,1\}$-matrix $A$, where if $(i,j) \in E$, then $A_{i,j} = 1$; otherwise $A_{i,j} = 0$. If the graph is weighted and $(i,j) \in E$, then the weighted adjacency matrix is denoted by $W$ and $W_{i,j}$ is represented by its edge weight.

Given the adjacency matrix $A$, the role extraction problem finds a $n \times n$ permutation matrix $P$ such that the edges in the permuted adjacency matrix $A_{p} := P^{T}AP$, which represents the relabeled graph, are mainly concentrated into blocks (see Figure~\ref{fig:RMPexample}). In order to form the relabeled graph, one needs to determine if the nodes are structurally or regularly equivalent.

\begin{figure}[!h]\centering
\includegraphics[scale=1.0]{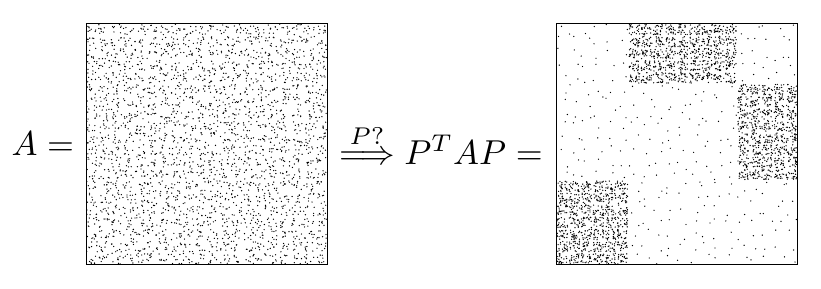}
\caption{Block modeling : find the permutation $P$ such that the relabeled adjacency matrix $P^TAP$ 
has an approximate block structure. \label{fig:RMPexample}}
\end{figure}

Two nodes are structurally equivalent if they have exactly the same children and the same parents \cite{LW:1971}. In terms of block modeling, this means that all blocks in the permuted adjacency matrix must then either have only 0's or only 1's~: 
if we denote by $I$ and $J$ two sets of nodes that each contain structurally equivalent nodes, then
(i)  the block $(I,J)$ in $P^TAP$ is a ``0'' block if {\em none} of the nodes in $I$ has {\em any} children in group $J$, or equivalently, 
 if {\em none} of the nodes in $J$ has {\em any} parents in group $I$, and (ii) it is a ``1'' block if {\em all} the nodes in group $I$ has {\em all} the nodes in group $J$ as children, or equivalently,  if {\em all} the nodes in group $J$ has {\em all} the nodes in group $I$ as parents \cite{DBF:2005}.

Structural equivalence usually extracts many small roles in networks \cite{WR:1983,EB:1994,EB:1996}. Thus, an alternative equivalence relation, regular equivalence, was proposed to extract larger roles. Two nodes are regularly equivalent if, while they do not necessarily share the same neighbors, they have neighbors who are themselves structurally or regularly equivalent.  Alternatively, this means that the blocks in the permuted adjacency matrix must contain at least one element per row and column (called a regular block). Note that structural equivalence implies regular equivalence, but regular equivalence does not imply structural equivalence.

Every group of regularly equivalent nodes of $A$ can be represented by a single role in the role graph, which is a smaller assignment matrix $B$ with a number of nodes that is the number of groups of regular equivalent nodes in $A$. Moreover, $B_{I,J}=1$ if the nodes in group $I$ in the original graph $A$ all point to the nodes in group $J$ in $A$, and $B_{I,J}=0$
if none of the nodes in group $I$ in $A$ point to any of the nodes in group $J$ in $A$. Additionally, Reichardt and White assumed that no two roles in the role graph $B$ may be structurally equivalent  because if they were, then both roles would interact with the same roles and we would be unable to distinguish between the two roles \cite{RW:2007,Reichardt:2009}. So, these two roles should be merged into one role. These ideas are used below to define an ideal form of adjacency matrix that facilitates the extraction of roles by its use to approximate the adjacency matrix of the given graph.

Earlier research in role extraction involved creating a cost function to minimize over both the role structure and role assignment of nodes in the graph based on a choice of equivalence relation~\cite{WF:1994,DBF:2005, RW:2007,Reichardt:2009}. That is, if $B$ is the adjacency matrix of the role graph  and $\sigma$ is the assignment of each node to a role, then the problem can be stated as
\begin{equation}\label{eq:rmp}
(B^{*}, \sigma^{*}) = \argmin_{B, \sigma} Q_{A}(B, \sigma),
\end{equation}
where $Q_{A}$ depends on the graph topology and chosen equivalence criterion. Note that~\eqref{eq:rmp} is a combinatorial optimization problem with respect to two groups of variables and is harder than the community detection problem, which is, in general, NP-hard~\cite{BRGGHNW:2006,BRGGHNW:2008}. The cost function $Q_{A}(B, \sigma)$ can either be constructed indirectly, based on a (dis)similarity measure between pairs of nodes, or directly, based on a measuring of the fit of clusters compared to an ideal clustering with perfect relations within and between clusters. We focus on an indirect approach to the role extraction problem and show how the similarity metric chosen for our approach can be used to first extract the optimal assignment function $\sigma^*$ of the role structures in a network. Once the groups of the assignment $\sigma^*$ have been identified, then in a second step, the role matrix $B$ is easy to construct, provided we use a cost function that is ``decoupled'' in the elements of $B$,  since each element $B_{ij}\in \{0,1\}$ can then be chosen independently in order to maximize $Q_A(B,\sigma^*)$.

\subsection{Role Models and Ideal Graphs}

We are particularly interested in graphs with a permuted adjacency matrix $A_P:=P^TAP$ which has a special block form that can be represented in the factorized form
\begin{equation} \label{struct} 
A_P=ZBZ^T, \quad  Z = \begin{bmatrix} z_{1} & 0 & \cdots & 0 \\
0 & z_{2} & \cdots & 0 \\ \vdots & \vdots & \ddots & \vdots \\ 0 & 0 & \cdots & z_{q} \end{bmatrix} =  \Diag\{ z_{1}, \ldots, z_q\} \in \mathbb{R}^{n\times q}, 
\end{equation}
where $q$ is the number of roles in the role graph, $z_i:=[1, \ldots, 1]^T\in \mathbb{R}^{n_i}$, $n = n_{1} + \cdots + n_{q}$, and $B$ is a $q\times q$ adjacency matrix (i.e., the role matrix) describing the roles in the original matrix $A_P$. 
We assume that the graph does not have disconnected nodes (a zero row and corresponding zero column of the adjacency matrix $A$) because this would imply that $Z$ has a zero row. We call such graphs \textbf{ideal graphs} because all of the nodes in each role are structurally equivalent. 

Such a decomposition is not unique. If, for instance, the $k$-th row and column of the matrix $B$ contains only zeros , then clearly the row and column can be removed  in the decomposition $ZBZ^T$, which yields a smaller decomposition.  For example, suppose the matrix $B$ and the decomposition matrices are
\begin{equation*}
A_P=ZBZ^T, \quad B:= \left[\begin{array}{ccc} 0 & 1 & 0 \\ 1 & 1 & 0 \\ 0 & 0 & 0  \end{array}\right], \quad  Z := \Diag\{ z_1, z_2 , z_3\}
\end{equation*}
and the matrix  $A_P$ can be represented by the 2 role decomposition $A_P=\hat Z \hat B \hat Z^T$, where 
\begin{equation*}
A_P=\hat Z\hat B\hat Z^T, \quad \hat B:= \left[\begin{array}{cc} 0 & 1  \\ 1 & 1   \end{array}\right], \quad \hat Z := \Diag\{ z_1, z_2 \}.
\end{equation*}
This implies that the original adjacency matrix also had a number of corresponding zero rows and columns. Therefore, we do not need to associate any role to the corresponding nodes.
 
Non-uniqueness of the factorization also occurs when the matrix $B$ has itself a decomposition with a smaller matrix $\hat B$ with fewer roles. Consider the $3\times 3$ image matrix in the decomposition
\begin{equation*}
A_P=ZBZ^T, \quad B:= \left[\begin{array}{ccc} 1 & 1 & 1 \\ 1 & 0 & 0 \\ 1 & 0 & 0  \end{array}\right], \quad  Z := \Diag\{ z_1, z_2 , z_3\}.
\end{equation*} 
 The roles 2 and 3 of the image matrix $B$ are structurally equivalent and can be combined into a single role. This implies that $B$ has the factorization as
\begin{equation*}
\left[\begin{array}{ccc} 1 & 1 & 1 \\ 1 & 0 & 0 \\ 1 & 0 & 0  \end{array}\right]=
 \left[\begin{array}{cc} 1 & 0 \\  0 & 1 \\ 0 & 1  \end{array}\right] \left[\begin{array}{cc} 1 & 1 \\  1 & 0 \end{array}\right]
  \left[\begin{array}{ccc} 1 & 0 & 0\\  0 &  1 & 1  \end{array}\right]
\end{equation*}
 which can then be used to obtain the smaller decomposition
\begin{equation*}
A_P=\hat Z\hat B\hat Z^T, \quad \hat B:= \left[\begin{array}{cc} 1 & 1  \\ 1 & 0  \end{array}\right], \quad \hat Z := \Diag\{ z_1, \hat z_2 \}
\end{equation*}
where $\hat z_2$ has now $n_2+n_3$ elements.
 
In order to introduce a form of uniqueness, we define the so-called {\em minimal} role matrices.

\begin{definition}
 \label{th:GS2} If $A=ZBZ^T$ is an adjacency matrix of a connected unweighted directed ideal graph then $B$ is a \textbf{minimal role} matrix if no two rows 
of the compound matrix  $\left[\begin{array}{cc} B & B^T  \end{array}\right]$ are linear dependent.
\end{definition}

Note that, unlike community detection,  a factorization $A=ZBZ^T$ always exists by simply taking $A=B$ and $Z=I$. In practice, this is of little interest since the point of role extraction is to identify structure in the graph with significantly fewer roles than nodes, i.e., a low-rank ideal adjacency matrix that approximates $A$ well. As a result, we explore the relationship between the neighborhood pattern similarity measure, minimal ideal graphs, the rank of their adjacency matrices, the number of roles, and the rank of $A$.

\subsection{Uniform distributions}
Browet et al. showed empirically that the indirect method using the neighborhood pattern similarity measure worked well extracting the role structure from randomly generated Erd\"{o}s-R\'{e}nyi graphs~\cite{Browet:2014}. The method works well on randomly generated Erd\"{o}s-R\'{e}nyi graphs because their expected value is a matrix of rank 1.  

In an Erd\"{o}s-R\'{e}nyi graph, each node has a probability $p$ to be present. Therefore the expected value of the adjacency matrix of such an $n\times n$ graph equals $E(A)=p\mathbf{1}\mathbf{1}^T$,
which is rank 1 and has Perron root $np$. Moreover, it has been shown that for large Erd\"{o}s-R\'{e}nyi  graphs (these are undirected graphs with edges of equal probability $p$) the other eigenvalues of the adjacency matrix have an expected value that is  much smaller than $np$~\cite{EKYY:2013}. Therefore, if we apply the same reasoning to a matrix $A$ with role matrix $B$, then the expected value of the adjacency matrix would be
\begin{equation*}
E(A) = (PZ) [p_{in}B + p_{out}(\mathbf{1}\mathbf{1}^T-B)](PZ)^T = (PZ)E(B)(PZ)^T,
\end{equation*}
where $p_{in}$ is the probability of an edge existing between corresponding roles and $p_{out}$ is the probability that an edge does not exist. Therefore, $E(A)$ has rank at most $q$ and if the remaining $n-q$ eigenvalues are also small such as in the standard Erd\"os-Renyi case, then $E(A)$ is a good approximation of $A$.

\section{Analysis of the Rank of the Neighborhood Pattern Similarity Measure}
In practice, computing the similarity matrix is expensive, especially for large networks. So, a low-rank approximation of the similarity matrix is preferable due to efficiency in storage and computational complexity, but  we may lose information necessary to extract the role structure. In this section, we prove for, the ideal graph case, that there exists a relationship between the rank of neighborhood pattern similarity matrix and the number of roles in the network. In addition, we prove that the roles can be extracted correctly, even for a similarity matrix with rank less than the number of roles. Section~\ref{sec:perturb} considers extracting the role structure of graphs whose similarity measure have a good low-rank approximation by examining the neighborhood of graphs around ideal graphs with low-rank similarity measures.

\subsection{Neighborhood Pattern Similarity Measure}
The neighborhood pattern similarity measure determines if two nodes are similar if they have similar neighborhood patterns \cite{Browet:2014,BVD:2014,Denayer:2012}. A neighborhood pattern of length $\ell$ is defined as the number of incoming (I)  and outgoing (O) edges starting from a source node \cite{Browet:2014}. For example, neighborhood patterns of length $1$ are patterns where two nodes are similar if they have common parents, i.e., Figure~\ref{subfig:neighborhoodPatternsLength1_pattern1}, or common children, i.e., Figure~\ref{subfig:neighborhoodPatternsLength1_pattern2}. The number of common parents between  two nodes $(i, j)$  is the number of nonzero row elements shared by the $i$-th and $j$-th columns of $A$, i.e., $[A^{T}A]_{i, j}$ and the number of common children is the number of nonzero column elements shared by the $i$-th and $j$-th rows of $A$, i.e., $[AA^{T}]_{i, j}$. Therefore, the number of common reachable nodes, called target nodes, between every pair of source nodes for neighborhood patterns of length $1$ is $N_{1} = AA^T + A^TA$ \cite{Browet:2014}.

\begin{figure*}[!h]\centering
\begin{subfigure}{0.4\textwidth}\centering
\includegraphics[scale=0.5]{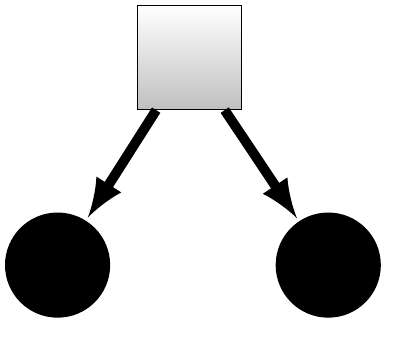}
\caption{Pattern I: $A^TA$ \label{subfig:neighborhoodPatternsLength1_pattern1}}
\end{subfigure}
~
\begin{subfigure}{0.4\textwidth}\centering
\includegraphics[scale=0.5]{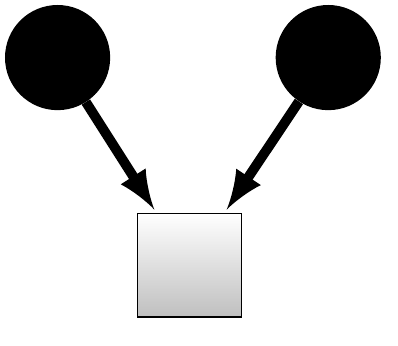}
\caption{Pattern O: $AA^T$ \label{subfig:neighborhoodPatternsLength1_pattern2}}
\end{subfigure}
\caption{All possible neighborhood patterns of length $1$ for the similarity measure where the source nodes $i$, $j$ are the black circles and the target node is the gray square \label{fig:neighborhoodPatternsLength1}}
\end{figure*}

For neighborhood patterns of length $2$, there are four possible neighborhood patterns and the number of common target nodes between every pair of source nodes for neighborhood patterns of length $2$ is given by (see~\cite{Browet:2014} for a more detailed proof of this)
\begin{align}
N_{2} &= AAA^TA^T + AA^TAA^T + A^TAA^TA + A^TA^TAA = AN_{1}A^T + A^T N_{1} A.\nonumber
\end{align}

In general, the number of possible neighborhood patterns of length $\ell$ is $2^{\ell}$ and the number of common target nodes is given by
\begin{equation*}
N_{\ell} = AN_{\ell-1} A^T + A^T N_{\ell-1} A.
\end{equation*}
Therefore, the neighborhood pattern pairwise node similarity measure can be defined as the weighted sum of the number of common target nodes of the neighborhood patterns of any length, i.e.,
\begin{equation}\label{eq:similaritymeasureSumRepresentation}
S = \displaystyle\sum_{\ell = 1}^{\infty} \beta^{2(\ell-1)}N_{\ell},
\end{equation}
where $\beta\in\mathbb{R}$ is a scaling parameter that weights longer neighborhood patterns \cite{Browet:2014} (implying that as $\beta$ increases it is expected to be more difficult to have two nodes similar to each other). Note that the similarity matrix $S$ is a symmetric positive semi-definite matrix.

\subsection{The Similarity Matrix Recurrence}
In Browet's thesis \cite{Browet:2014}, the following recurrence relation was proposed for computing a similarity matrix, where $\Gamma_A[X]:=AXA^T+A^TXA$ is a linear mapping from $\mathbb{R}^{n\times n}$ to  $\mathbb{R}^{n\times n}$
which moreover preserves symmetry, non-negativity and semi-definiteness of the argument $X$~:
\begin{equation} \label{iterate}  
S_1 := \Gamma_{A}[I_{n}] = AA^T+A^TA, \quad S_{k+1}:= \Gamma_A[ I_{n} + \beta^2 S_k], \; \forall k \ge 1,
\end{equation} 
where $I_{n}$ is the $n\times n$ identity matrix. This sequence was shown to converge to a bounded fixed point
\begin{equation}
S_\infty= \Gamma_A[ I_{n} + \beta^2 S_\infty]
\end{equation}
if and only if $\beta^2$ satisfies
\begin{equation}\label{converge}
\beta^2 < \frac{1}{\rho(A\otimes A+A^T\otimes A^T)}.
\end{equation} 
Since the initial matrix $S_1$ is symmetric, positive semi-definite and non-negative, and the mapping $\Gamma_A$ preserves these properties, it follows that all matrices $S_k$ are symmetric, positive semi-definite and non-negative.

\begin{theorem} \label{th:similarity} 
Consider the iteration \eqref{iterate} where $\beta$ is chosen according to \eqref{converge} to guarantee convergence to a bounded solution $S_{\infty}$.
Then all matrices $S_k$, including their limit $S_{\infty}$, have the same image as the compound matrix $[A~A^T]$, and the same rank $r$. 
\end{theorem}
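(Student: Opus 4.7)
The plan is to prove, by induction on $k$, that $\Image(S_k) = \Image([A~A^T])$ and then transfer the statement to the limit via the fixed-point equation. The equality of ranks follows immediately since the rank of a matrix equals the dimension of its image.

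The key ingredient I would isolate as a lemma is the following standard fact about positive semi-definite matrices: if $M_1,\ldots,M_p$ are symmetric positive semi-definite matrices of the same size, then
\begin{equation*}
\ker\left(\sum_{i=1}^p M_i\right) = \bigcap_{i=1}^p \ker(M_i), \quad \text{equivalently} \quad \Image\left(\sum_{i=1}^p M_i\right) = \sum_{i=1}^p \Image(M_i).
\end{equation*}
This holds because $x^T(\sum_i M_i)x = 0$ forces $x^T M_i x = 0$ for each $i$ (each term being non-negative), and for PSD matrices $x^T M_i x = 0 \Leftrightarrow M_i x = 0$. This is the only non-cosmetic ingredient in the proof.

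With the lemma in hand, the base case is immediate: since $AA^T$ and $A^TA$ are both PSD, the lemma gives $\Image(S_1) = \Image(AA^T) + \Image(A^TA) = \Image(A) + \Image(A^T) = \Image([A~A^T])$. For the inductive step, I rewrite the recurrence as
\begin{equation*}
S_{k+1} = AA^T + A^TA + \beta^2\, A S_k A^T + \beta^2\, A^T S_k A,
\end{equation*}
which expresses $S_{k+1}$ as a sum of four PSD matrices. Applying the lemma, $\Image(S_{k+1})$ equals the sum of the four images. The two ``extra'' terms $A S_k A^T$ and $A^T S_k A$ have images contained in $\Image(A)$ and $\Image(A^T)$ respectively, hence in $\Image([A~A^T])$, while the first two terms already span $\Image([A~A^T])$ by the base case argument. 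This forces $\Image(S_{k+1}) = \Image([A~A^T])$.

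For the limit $S_\infty$, the cleanest route is not to argue about continuity of the image under convergence (which can fail in general), but simply to apply the same four-term decomposition to the fixed-point equation $S_\infty = \Gamma_A[I_n + \beta^2 S_\infty]$; since $S_\infty$ is PSD (being the limit of PSD matrices), the same lemma yields $\Image(S_\infty) = \Image([A~A^T])$. Equality of ranks with the common value $r := \rank([A~A^T])$ then follows. I do not expect any serious obstacle here: the convergence hypothesis \eqref{converge} is only needed so that $S_\infty$ exists as a bona fide PSD matrix, after which the fixed-point identity does all the work. The one point to be careful about is making sure the PSD-image lemma is invoked with genuinely PSD summands; this is guaranteed since $I_n + \beta^2 S_k$ is PSD (even positive definite) and $\Gamma_A$ preserves positive semi-definiteness, as already noted in the excerpt.
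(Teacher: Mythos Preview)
Your proof is correct and close in spirit to the paper's, but the packaging and the treatment of the limit differ enough to be worth noting.

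For the finite-$k$ part, the paper does not decompose into four PSD summands. Instead it writes the recurrence in the compact factored form
\[
S_{k+1} = \begin{bmatrix} A & A^T \end{bmatrix}
\begin{bmatrix} I+\beta^2 S_k & 0 \\ 0 & I+\beta^2 S_k \end{bmatrix}
\begin{bmatrix} A^T \\ A \end{bmatrix},
\]
observes that the middle block is positive \emph{definite} (since $S_k\succeq 0$), and invokes the single fact that $\Image(MDM^T)=\Image(M)$ whenever $D\succ 0$. Your PSD-sum lemma is the ``expanded'' version of the same observation; either route is fine, and yours has the mild advantage that the inductive hypothesis on $\Image(S_k)$ is never actually used---the $AA^T+A^TA$ terms already pin down the image regardless of $S_k$.

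For $S_\infty$ the two arguments genuinely diverge. The paper worries that rank could drop in the limit and secures it via the Loewner inequality $S_k\succeq S_1$ for all $k$, which passes to the limit and forces $\Image(S_1)\subseteq\Image(S_\infty)$. Your route---apply the four-term decomposition directly to the fixed-point equation, using that $S_\infty$ is PSD as a limit of PSD matrices---is more direct and avoids the ordering argument entirely. Both are valid; yours is arguably cleaner here.
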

\begin{proof}
The iteration \eqref{iterate} can be rewritten as follows
\begin{equation*}
S_1:= \left[\begin{array}{cc} \! A \! & \! A^T \! \end{array}\right]\left[\begin{array}{cc} \! A^T \!  \\ A \end{array}\right], \; S_{k+1} = \left[\begin{array}{cc} \! A \! & \! A^T \! \end{array}\right]
\left[\begin{array}{cc} \! I+\beta^2S_k \! & 0 \\ 0 & \! I+\beta^2S_k \!   \end{array}\right]\left[\begin{array}{cc} \! A^T \! \\ A \end{array}\right], \;  \forall k \ge 1.
\end{equation*}
The first equation implies  $\Image S_1 = \Image \left[ A ~ A^T \right]$, where $\Image M$ denotes the image (or column space) of a matrix $M$. In the second equation, the middle matrix is positive definite since $S_k$ is semi-definite, and this implies that  $\Image S_{k+1} = \Image \left[ A ~A^T \right]$. This also implies that the rank of all matrices $S_k,~k \geq 1$
is equal to the rank of  $\left[ A ~A^T \right]$. For the limit $S_\infty$, one has to be more careful, since the rank could drop. But the second equation also implies that  (in the Loewner ordering) $S_k\succeq S_1$ for all $k$ and hence the rank must remain constant.
\end{proof}

\subsection{The Ideal Graph Case}
For the role extraction problem, the idea of using a low-rank projection of a similarity measure for the construction of the indirect cost function was proposed by Browet and Van Dooren in~\cite{Browet:2014,BVD:2014}. However, they only provided empirical evidence of the relationship between the rank of the neighborhood pattern similarity measure and the number of roles in  the network. In this section, we prove, for ideal graphs, a relationship between the rank factorization of the similarity matrix and the number of roles in the network. In addition, we prove that the rows of the left factor of the rank factorization of the similarity matrix have exactly $q$ clusters of all parallel vectors.

Recall that adjacency matrix, $A$, of an ideal graph satisfies the decomposition
\begin{equation*}
A = (PZ)B(PZ)^T
\end{equation*}
where $B$ is a minimal role matrix.
\begin{corollary}
When the  matrix $A$ is the adjacency matrix of an ideal graph with minimal role matrix $B$, then $\Image S_k\subseteq \Image PZ$, and hence
$r:=\rank{S_k} \le q := \rank{Z}$, where $q$ is the number of roles.
\end{corollary}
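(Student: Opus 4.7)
The plan is to leverage Theorem~\ref{th:similarity}, which has already established that $\Image S_k = \Image [A \; A^T]$ for every $k\ge 1$ and for the limit $S_\infty$, with common rank equal to $\rank [A \; A^T]$. This reduces the corollary to a purely algebraic statement about the column space of the compound matrix $[A \; A^T]$ under the factorization $A=(PZ)B(PZ)^T$.

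The key step is then to observe that both $A$ and $A^T$ factor through $PZ$ on the left. Explicitly, writing
\begin{equation*}
A = (PZ)\bigl[B(PZ)^T\bigr], \qquad A^T = (PZ)\bigl[B^T(PZ)^T\bigr],
\end{equation*}
one sees that every column of $A$ and every column of $A^T$ lies in $\Image PZ$. Concatenating gives $\Image [A \; A^T] \subseteq \Image PZ$, and combining with Theorem~\ref{th:similarity} yields $\Image S_k \subseteq \Image PZ$ for all $k\ge 1$ and for $S_\infty$.

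The rank bound then follows immediately: $r = \rank S_k = \rank [A \; A^T] \le \rank PZ$. Since $P$ is a permutation matrix (hence invertible), $\rank PZ = \rank Z = q$, where the latter equality is clear from the block-diagonal form of $Z$ in \eqref{struct} together with the assumption that no $z_i$ is empty (no disconnected nodes). This gives $r \le q$ as claimed.

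There is essentially no obstacle here: once Theorem~\ref{th:similarity} is in hand, the corollary is a one-line consequence of the fact that $PZ$ is a common left factor of $A$ and $A^T$. The only thing worth being careful about is distinguishing $\rank PZ = \rank Z$ (which uses invertibility of $P$) from the stronger statement that the \emph{minimality} of $B$ would be needed for equality $r = q$; minimality of $B$ plays no role in the inequality $r \le q$ and is only relevant in the companion result that equality holds.
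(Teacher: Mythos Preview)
Your proof is correct and follows exactly the same approach as the paper's own proof, which simply states that the result is a direct consequence of Theorem~\ref{th:similarity} since both $A$ and $A^T$ have an image included in $\Image PZ$. You have merely unpacked this one-line argument in greater detail, and your closing remark that minimality of $B$ is not needed for the inequality $r\le q$ is a correct and useful observation.
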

\begin{proof}
This is a direct consequence of Theorem \ref{th:similarity} since both $A$ and $A^T$ have an image included in the image of $PZ$.
\end{proof}

Since $S_k$ is symmetric and  $\Image S_k\subseteq  \Image \! PZ$, we can write it as $S_k\! =\! (PZ)\hat S_K(PZ)^T,$ where $\hat S_k:=N^{-1}(PZ)^TS_k(PZ)N^{-1} \succeq 0$ (i.e., $\hat S_k$ is positive semi-definite) and
$N:=Z^TZ=\diag\{\lVert z_{1} \rVert_{2}^{2}, \ldots, \lVert z_{q} \rVert_{2}^{2}\}$ is a $q\times q$ diagonal matrix. We can then reformulate the iteration \eqref{iterate} as a recurrence for the $\hat S_k$ matrices~: 
\begin{eqnarray*} 
\hat S_1 & :=& \left[\begin{array}{cc} B & B^T \end{array}\right] \left[\begin{array}{cc} N & 0  \\ 0 & N  \end{array}\right] \left[\begin{array}{cc} B^T \\ B \end{array}\right], \\
\hat S_{k} &:=& \left[\begin{array}{cc} B & B^T \end{array}\right]
\left[\begin{array}{cc} N_k & 0 \\ 0 & N_k   \end{array}\right]\left[\begin{array}{cc} B^T \\ B \end{array}\right]
, \;  \forall k > 1
\end{eqnarray*}
where $N_k:=N+\beta^2N\hat S_{k-1}N$.
We then obtain the following result
\begin{corollary}
When $A$ is an adjacency matrix for an ideal graph with minimal role matrix $B$, then for all $k$, including $k=\infty$, it follows that $\Image \hat S_k = \Image [B~B^T]$,  $\Image S_k= \Image PZ\hat S_k = \Image PZ[B~B^T]$, and hence
$r:=\rank{\hat S_k}=\rank{S_k} \le q := \rank{Z}$, where $q$ is the number of roles.
\end{corollary}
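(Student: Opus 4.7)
The plan is to mirror the proof of Theorem~\ref{th:similarity} applied to the reduced recurrence for $\hat S_k$, and then transfer the resulting image identities back to $S_k$ through the factor $PZ$. The entire argument reduces to one linear-algebra fact: if $D\succ 0$ and $M$ is any real matrix, then $\ker(MDM^T) = \ker M^T$, and consequently $\Image(MDM^T) = \Image M$.

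The first thing I would check is that the $2q\times 2q$ weight in the reduced iteration is positive definite at every step. The diagonal matrix $N = Z^TZ = \diag\{\|z_1\|_2^2,\ldots,\|z_q\|_2^2\}$ is positive definite because the no-disconnected-nodes assumption forces each $z_i\neq 0$. Each $\hat S_{k-1}$ is a congruence of a positive definite matrix by $[B\;\; B^T]^T$, so $\hat S_{k-1}\succeq 0$ by induction, and therefore $N_k = N + \beta^2 N\hat S_{k-1}N \succ 0$, hence $\Diag(N_k,N_k) \succ 0$. The same holds for the $k=1$ weight $\Diag(N,N)$.

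Applying the lemma with $M = [B\;\; B^T]$ and $D = \Diag(N_k,N_k)$ then gives $\Image\hat S_k = \Image[B\;\; B^T]$ for every finite $k\ge 1$, and $\rank\hat S_k\le q$ since $[B\;\; B^T]\in\mathbb{R}^{q\times 2q}$. To move back to $S_k = (PZ)\hat S_k(PZ)^T$, I would note that $PZ$ has full column rank $q$ (because $(PZ)^T(PZ) = N\succ 0$), so $(PZ)^T$ is surjective onto $\mathbb{R}^q$; combined with $\hat S_k\succeq 0$, this yields $\Image S_k = PZ\cdot\Image\hat S_k = PZ\cdot\Image[B\;\; B^T]$ and $\rank S_k = \rank\hat S_k$.

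The main obstacle, exactly as in Theorem~\ref{th:similarity}, is the limit case $k=\infty$, where in principle the rank could drop along a convergent sequence. I would resolve it with the same Loewner-ordering argument used there: prove inductively that $\hat S_{k+1}\succeq \hat S_k$, which follows from $N_{k+1}\succeq N_k$ (by the inductive step) and congruence by $[B\;\; B^T]$. Then $\hat S_\infty\succeq \hat S_1$ forces $\rank\hat S_\infty\ge \rank\hat S_1 = \rank[B\;\; B^T]$, while the image-containment half still supplies the matching upper bound, so both image and rank identities persist at $k=\infty$.
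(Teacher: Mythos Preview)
Your proposal is correct and follows essentially the same route as the paper: the paper's proof says only that the $\Image\hat S_k=\Image[B\;\; B^T]$ part ``is very similar to the proof of Theorem~\ref{th:similarity}'' and that $\Image S_k=\Image PZ\hat S_k$ follows from $S_k=(PZ)\hat S_k(PZ)^T$ together with semi-definiteness, which is exactly what you unpack in detail. The one minor elaboration is that you establish full Loewner monotonicity $\hat S_{k+1}\succeq\hat S_k$ rather than just $\hat S_k\succeq\hat S_1$ (which is all the paper uses in Theorem~\ref{th:similarity}); either suffices, and your version is a harmless strengthening.
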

\begin{proof}
The proof that  $\Image \hat S_k = \Image [B~B^T]$ is very similar to the proof of Theorem \ref{th:similarity}. The fact that   $\Image S_k= \Image PZ\hat S_k$
follows from the identity $S_k=(PZ)\hat S_k(PZ)^T$ and the fact that both $S_k$ and $\hat S_k$ are semi-definite. The rest easily follows.
\end{proof}

Now, we show that even when we have a factorization \eqref{struct} with the minimal role matrix $B$, $\rank{S_k}$ can be smaller than $q$, the number of roles. Let $A = (PZ)B(PZ)^T$ with 
\begin{equation*}
B:=\left[\begin{array}{ccc} 0 & 0 & 0 \\ 1 & 0 & 1 \\ 1 & 0 & 1  \end{array}\right].
\end{equation*}
Let us for simplicity choose $(PZ)=I_3$, i.e. $A=B$, $\hat S_k=S_k$ for all $k$, and $q=3$. 
It is easy to check then that $\rank{A}=\rank{B}=1$, and
\begin{equation*}
\rank{[A~A^T]}=\rank{[B~B^T]}=\rank{\left[\begin{array}{ccc|ccc} 0 & 0 & 0 & 0 & 1 & 1 \\ 1 & 0 & 1 &  0 & 0 & 0 \\ 1 & 0 & 1 & 0 & 1 & 1 \end{array}\right]}=2,
\end{equation*}
and
\begin{equation*}
\rank S_k = \rank{S_1}=\rank{\left[\begin{array}{ccc} 2 & 0 & 2 \\ 0 & 2 & 2 \\ 2 & 2 & 4  \end{array}\right]}=2, \quad S_1=2\left[\begin{array}{ccc} 1 & 0\\ 0 & 1 \\ 1 & 1 \end{array}\right]\left[\begin{array}{ccc} 1 & 0 & 1 \\ 0 & 1 & 1 \end{array}\right].
\end{equation*}

Nevertheless, we prove below that we can recover the different roles via clustering performed on the low-rank factorization of any matrix ${S}_k$, even though its rank is smaller than the number of roles.

\begin{theorem}\label{thm:lowrankfactor} Let $A=(PZ)B(PZ)^T$ be an ideal graph with $B$ minimal. For any $k$, let $\hat S_k \succeq 0$ be defined as above with rank $r\le q$ and low-rank factorization $\hat S_k = V_kV_k^T$ where $V_k\in\mathbb{R}^{q\times r}$. 
The matrix $S_k$ has the low-rank factorization $S_k=(PZV_k)(PZV_k)^T$ and the rows of the matrix $PZV_k$ have exactly $q$ clusters of all parallel vectors.
\end{theorem}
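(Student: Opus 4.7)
The first half of the theorem is essentially immediate. The corollary just before the theorem already gives $S_k = (PZ)\hat{S}_k(PZ)^T$, so substituting the rank factorization $\hat{S}_k = V_k V_k^T$ directly yields $S_k = (PZ)V_kV_k^T(PZ)^T = (PZV_k)(PZV_k)^T$.

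For the clustering statement, my plan is to first use the combinatorial structure of $Z$ to see that rows of $PZV_k$ fall into at most $q$ groups, and then use the minimality of $B$ to rule out any coincidences between those groups. Because $Z = \Diag\{z_1,\dots,z_q\}$ with each $z_i$ an all-ones column, every row of $PZ$ is a standard basis vector $e_{\sigma(i)}^T\in\mathbb{R}^q$, where $\sigma(i)\in\{1,\dots,q\}$ is the role of node $i$ after relabelling by $P$. Hence $(PZV_k)_{i,\cdot}$ coincides with the $\sigma(i)$-th row of $V_k$, so any two nodes with the same role automatically produce identical (and therefore parallel) rows in $PZV_k$. This gives at most $q$ clusters.

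The heart of the argument, and the step I expect to be the main obstacle, is to show that rows of $V_k$ indexed by distinct roles $i\ne j$ are never parallel. Suppose, for contradiction, that $(V_k)_{i,\cdot} = \alpha(V_k)_{j,\cdot}$ for some scalar $\alpha$. Then $V_k^T(e_i-\alpha e_j)=0$, and the standard identity $\ker V_k^T = \ker V_kV_k^T$ (which follows at once from $\|V_k^Tv\|^2 = v^T\hat{S}_k v$) promotes this to $e_i-\alpha e_j\in\ker\hat{S}_k$. Using the symmetry of $\hat{S}_k$ together with the previous corollary's identification $\Image\hat{S}_k = \Image[B~B^T]$, one obtains
\begin{equation*}
\ker\hat{S}_k = (\Image\hat{S}_k)^\perp = (\Image[B~B^T])^\perp = \ker[B~B^T]^T.
\end{equation*}
Hence $[B~B^T]^T(e_i-\alpha e_j)=0$, which says exactly that row $i$ of $[B~B^T]$ equals $\alpha$ times row $j$. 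This directly contradicts the minimality condition of Definition~\ref{th:GS2}; the degenerate case $\alpha=0$ is also excluded, because a zero row of $[B~B^T]$ would be linearly dependent with any other row, again violating minimality.

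Putting the pieces together, the $n$ rows of $PZV_k$ group into exactly $q$ clusters of parallel vectors, one per role, with rows inside a cluster being equal and rows across clusters not being scalar multiples of one another. The whole argument hinges on the transfer ``two rows of $V_k$ parallel $\Rightarrow$ two rows of $[B~B^T]$ linearly dependent'', which is where the symmetric-kernel identity, the image characterization from the preceding corollary, and the definition of minimality all come together.
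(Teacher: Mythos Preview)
Your proof is correct and follows essentially the same route as the paper: both arguments reduce to showing that two parallel rows of $V_k$ would force the corresponding rows of $[B~B^T]$ to be linearly dependent, contradicting minimality, and then read off the block structure of $ZV_k$. Your version simply spells out the kernel/image transfer $\ker\hat S_k = \ker[B~B^T]^T$ more explicitly than the paper does.
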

\begin{proof}
Since all matrices $\hat S_k=\left[\begin{array}{cc} B & B^T\end{array}\right]\diag(N_k,N_k)\left[\begin{array}{cc} B^T \\ B\end{array}\right]$ are symmetric matrices with the same image, they also have the same kernel, which must be the kernel of $\left[\begin{smallmatrix} B^T \\ B\end{smallmatrix}\right]$.
Let $v_{j,k}^T,~j=1,\ldots,q$ be the rows of the matrix $V_k$. Then $v_{j,k}\neq 0$ for any $j$, since otherwise the $j$-th row of $B$ and $B^T$ would be zero and this violates the minimality assumption. For the same reason, no two rows $v_{i,k}^T$ and $v_{j,k}^T$ of $V_k$ can be parallel, since otherwise the same two rows of $B$ and $B^T$ would be parallel, which means that these roles would be structurally equivalent and this also violates the minimality assumption. 
If we now look at the (unpermuted) matrix $ZV_k$ then it has the form
\begin{equation*}
ZV_k = \left[\begin{array}{c} z_1v_{1,k}^T \\ \vdots \\ z_qv_{q,k}^T \end{array}\right],
\end{equation*}
and each block $z_jv_{j,k}^T$ corresponds to a cluster of $n_j$ row vectors parallel to $v_{j,k}^T$. Since none of the vectors $v_{j,k}$ is zero or parallel to 
another row of $V_k$, we have exactly $q$ different clusters. This is of course not affected by the permutation $P$.
\end{proof}

\begin{remark}
If $B$ is not a minimal role matrix, then the correct number of roles of the compressed image matrix $\hat B$ can be detected by removing zero rows, or merging dependent rows, of $V$. \hfill $\Box$
\end{remark}

By Theorem~\ref{thm:lowrankfactor}, we can extract the role structure of the network from the low-rank factor of any $S_k$ of an ideal adjacency matrix. This is ideal for large networks because the similarity matrix recurrence~\eqref{iterate} has $O(n^{3})$ computational complexity, while the computational complexity for the current low-rank algorithms is $O(nr^{2})$~\cite{Browet:2014, Marchand:2017}.  While for any $k$ in this ideal case the vectors, $v_{j,k}$, $j= 1, \dotsc, q$, associated with each role are not parallel, there is value to not simply taking $k$ some convenient small fixed value.
As $k$ increases the angles between the non-parallel vectors, $v_{j,k}$, $j= 1, \dotsc, q$, increase thereby increasing the discrimination capabilities when a low-rank approximation of the similarity matrix of a non-ideal adjacency matrix is used to select the nearby ideal adjacency matrix used to extract the role structure.

\begin{remark}\label{rem:QA}
If for a given adjacency matrix $A$ (not necessarily ideal) and a given assignment function $\sigma$, we use the cost function
$$ Q_A(\sigma,B):=\|A-A_{ideal}\|_F^2=\|A-(PZ)B(PZ)^T\|_F^2=\|P^TAP-ZBZ^T\|_F^2,
$$
then $P$ and $Z$ are completely defined by the assignment function $\sigma$. Therefore, once $\sigma$ is fixed, the above function can be decoupled as
the sum 
$$  Q_A(\sigma,B)=\sum_{i,j} \|(P^TAP)_{i,j} -B_{i,j}z_iz_j^T\|_F^2, 
$$
where $(P^TAP)_{i,j}$ is the $(i,j)$ block of the permuted matrix $A$. Clearly this is minimized by choosing $B_{i,j}=1$ if
$z_i^T(P^TAP)_{i,j}z_j > \frac{n_in_j}{2}$ and  $B_{i,j}=0$ otherwise. For a non-ideal adjacency matrix $A$ this leaves the key question of using an approximate low-rank factorization of its associated similarity matrix to determine $Z$ that, in part, defines a near-by ideal matrix for use in determining the role structure. The feasibility of such approximations yielding a useful $Z$ and therefore $B$ is considered in Section~\ref{sec:perturb}. \hfill $\Box$
\end{remark}

\section{Perturbation Analysis}\label{sec:perturb}
In this section, we analyze the singular values of the adjacency matrix $A$ of a directed unweighted ideal graph and the effect perturbing $A$ has on them and on the similarity matrix. The perturbed adjacency matrix is denoted as $A + \Delta$, where $\Delta$ is the perturbation (i.e., addition or subtraction) of some elements $a_{i,j}$. The main questions to be addressed are:
\begin{enumerate}
\item Can we estimate the number $q$ of ``ideal'' roles from the singular values of the perturbed graph ? 
\item Is the dominant subspace of the perturbed matrices $S_k$ then still close to $\Image Z$ of the ideal adjacency matrix $A$ 
so that we can find the correct grouping of nodes? 
\end{enumerate}
If we can answer these two questions affirmatively, then we can recover the correct grouping of nodes and their roles.

\subsection{The Singular Values of the Ideal Graph Case}
Let $A=(PZ)B(PZ)^T$ be the adjacency matrix of an  directed unweighted ideal graph where $B$ is a minimal role matrix. Then, $Z$ can be represented by the factorization $Z = U_q N^{1/2}$ where $U_q^{T} U_q = I_{q}$ and $N^{1/2}  := \diag\{\lVert z_{1} \rVert_{2}, \ldots, \lVert z_{q} \rVert_{2}\}$, and we can write $\tilde{S}_{k} := (PU_q)^T S_{k} (PU_q)$, where $\tilde{S}_{k}$ is symmetric and $\tilde{S}_{k} \succeq 0$. We can reformulate the iteration on the $S_k$ matrices as a recurrence for the $\tilde{S}_{k}$ matrices, i.e., 
\begin{equation} \label{BBT} 
\tilde{S}_1  := \left[\begin{array}{cc} \tilde{B} & \tilde{B}^T \end{array}\right] \left[\begin{array}{cc} \tilde{B}^T \\ \tilde{B} \end{array}\right],  \quad\tilde{S}_{k+1} := \left[\begin{array}{cc} \tilde{B} & \tilde{B}^T \end{array}\right]
\left[\begin{array}{cc} I+\beta^2\tilde{S}_k & 0 \\ 0 & I+\beta^2\tilde{S}_k   \end{array}\right]\left[\begin{array}{cc} \tilde{B}^T \\ \tilde{B} \end{array}\right] ,
\end{equation}
where $\tilde{B} := N^{1/2} B N^{1/2}$. Therefore $\Image \tilde{S}_k = \Image [\tilde{B}~\tilde{B}^T]$,  $\Image S_k= \Image PU_q\tilde{S}_k = \Image PU_q[\tilde{B}~\tilde{B}^T]$. Also, the nonzero singular values of $A$ and $S_{k}$ are those of
\begin{align}
\Sigma(A) &= \Sigma(\tilde{B}),~~\text{since } A = (PU_q)N^{1/2}BN^{1/2}(PU_q)^{T} \nonumber \\
\text{and} \; \;\Sigma(S_{k}) &= \Sigma(\tilde{S}_{k}),~~\text{since } S_{k} = (PU_q) \tilde{S}_{k} (PU_q)^T, \;  \forall k\ge 1 . \nonumber
\end{align}

\begin{remark}
It is informative to look at the case 
of ideal undirected graphs, since then $A=A^T$ and $S_1=2A^2$. Moreover, all matrices $S_k$ then commute with $A$ and the expression for the matrices $S_k$ simplifies to 
$$S_k=2A^2\sum_{\ell =1}^k [2\beta^2A^2]^{(\ell-1)},$$
provided $2\beta^2\rho(A^2)<1$. 
So if we denote the ordered singular values of $A$ and of $S_k^\frac12$ by
$\alpha_i$ and $\lambda_i^{(k)}$, respectively, then the following relations hold~:
\begin{equation} \label{lambdak}
 [\lambda_i^{(1)}]^2=2\alpha_i^2, \quad [\lambda_i^{(k)}]^2=[\lambda_i^{(1)}]^2\sum_{\ell =1}^k [\beta \lambda_i^{(1)}]^{2(\ell-1)},
\end{equation}
provided $\beta< 1/\lambda^{(1)}_{\max}$. It follows then that if $\lambda_i^{(1)}>\lambda_j^{(1)}$ then also $\lambda_i^{(k)}>\lambda_j^{(k)}$ for all $k$, as long as the condition on $\beta$ holds, since then $\sum_{\ell =1}^k [\beta \lambda_i^{(1)}]^{2(\ell-1)} > \sum_{\ell =1}^k [\beta \lambda_j^{(1)}]^{2(\ell-1)}$. In Appendix A, we also prove that 
$$   \lambda_i^{(1)}>\lambda_j^{(1)} \quad \Longrightarrow \quad    \frac{\lambda_i^{(k+1)}}{\lambda_j^{(k+1)}} >  \frac{\lambda_i^{(k)}}{\lambda_j^{(k)}},   \; \forall k \ge 1, 
$$
and that the ratio $\frac{\lambda_i^{(k)}}{\lambda_j^{(k)}}$ also grows with $\beta$ as long as it satisfies the bound  $\beta<1/\lambda^{(1)}_{\max}$.
Therefore the ordering of the singular values $\lambda_i^{(k)}$ is preserved for all $k$ and the relative gap between dominant singular values and the small ones, is easier to identify for large $k$ and for $\beta$ close to its upper bound $1/\lambda^{(1)}_{\max}$.  While these trends are specific to the undirected graph case, due to our empirical studies, we also use them for the directed graph case below. \hfill $\Box$
\end{remark}

\subsection{The Singular Values of $A$ for the Perturbed Graph Case}\label{subsect:PerturbOfA}

Due to the analysis above showing the relevant gap increasing  with $k$, we concentrate on comparing the singular values of $A$ with those of  $S_1$. 
Furthermore, if we write  $S_1 = XX^T$, where $X:= \left[\begin{array}{cc} A & A^T \end{array}\right]$, $S_1^\frac12$ has the same nonzero singular values as $X$.
Therefore, we analyze here the perturbations of the spectrum of $A$ and $X:= \left[\begin{array}{cc} A & A^T \end{array}\right]$ for an arbitrary perturbation $\Delta$ of the ideal    
adjacency matrix $A$ of a directed unweighted ideal graph.  We analyze the case of an arbitrary perturbation of $\{0,1\}$ type and the special case of a Erd\"{o}s-Renyi type perturbation. For simplicity, we assume that $A$ and $X$ have both the same rank $q$, which is the number of roles.

Since we can rewrite the ideal decomposition $A=ZBZ^T$ as the normalized factorization $A = U_q\tilde BU_q^T$, where $N=\diag(n_1,\ldots, n_q)$, $Z=U_qN^{\frac12}$, $U_q^TU_q=I_q$  and $\tilde B=N^{\frac12}BN^{\frac12}$, it follows that we can construct an orthogonal transformation 
\begin{equation*}
U = \left[\begin{array}{cc} U_q & U_q^\perp \end{array} \right]
\end{equation*}
such that 
\begin{equation*}
A = U \left[\begin{array}{cc} \tilde B & 0 \\ 0 & 0 \end{array} \right]U^T, \quad \mathrm{and} \quad    \Delta = U \left[\begin{array}{cc} \tilde \Delta_{11}  & \tilde \Delta_{12}  \\ \tilde \Delta_{21}  & \tilde \Delta_{22}  \end{array} \right]U^T. 
\end{equation*}
We are interested in the dominant singular values of the perturbed matrices $A(\Delta):=A+\Delta$ and $S_1^{\frac12}(\Delta):=[A(\Delta)A^T(\Delta)+A^T(\Delta)A(\Delta)]^{\frac12}$, or equivalently of $X(\Delta):=[A(\Delta), A^T(\Delta)]$ since 
 $S_1(\Delta)=X(\Delta)X^T(\Delta)$. Clearly the gap between the $q$-th and $(q+1)$-st  singular value will affect how well the number of roles $q$ is detected when the graph is perturbed. Since we assumed that $\tilde B$ has full rank $q$, the nonzero singular values of $A$ are those of $\tilde B$ and the nonzero singular values of $S_1^{\frac12}$ are those of $\tilde X:=[\;\tilde B \;\; \tilde B^T]$. We also assume that the perturbation $\Delta$ is sufficiently smaller than the norm of the ideal adjacency matrix $A$ such that one can use classical perturbation analysis techniques.

It follows then from standard perturbation theory of the matrix
\begin{equation*}
A(\Delta):= U \tilde A(\Delta) U^T \quad \mathrm{where} \quad \tilde A(\Delta):= \left[\begin{array}{cc} \tilde B+ \tilde \Delta_{11}  & \tilde \Delta_{12}  \\ \tilde \Delta_{21}  & \tilde \Delta_{22} \end{array} \right], 
\end{equation*}
that the  $q$  dominant singular values of $A(\Delta)$ are  $\| \Delta\|_2$-close to those of $\tilde B$ and that the 
$(q+1)$-st singular value of $A(\Delta)$ is strictly bounded by the norm of the submatrix $\left[\begin{array}{cc} \tilde \Delta_{21}  & \tilde \Delta_{22} \end{array} \right]$ (see e.g.~\cite{GVL:2013}). Similarly, the $q$ dominant singular values of 
\begin{equation*}
X(\Delta) =  U \left[\begin{array}{cc} \tilde A(\Delta)  & \tilde A^T(\Delta)\end{array} \right] (I_2\otimes U^T) ,
\end{equation*}
where
\begin{equation*}
\left[\begin{array}{cc} \tilde A(\Delta)  & \tilde A^T(\Delta)\end{array} \right]:= \left[\begin{array}{cccc} \tilde B+\tilde \Delta_{11}  & \tilde \Delta_{12} & \tilde B^T + \tilde \Delta_{11}^T  & \tilde \Delta_{21}^T  \\ \tilde \Delta_{21}  & \tilde \Delta_{22} & \tilde \Delta_{12}^T  & \tilde \Delta_{22}^T \end{array} \right], 
\end{equation*}
are then $\| \Delta\|_2$-close to those of $\left[\begin{array}{cc} \tilde B & \tilde B^T\end{array}\right]$ and the $(q+1)$-st singular value is strictly bounded by the norm of the submatrix 
$\left[\begin{array}{cccc} \tilde \Delta_{21}  & \tilde \Delta_{22} & \tilde \Delta_{12}^T  & \tilde \Delta_{22}^T \end{array} \right]$~\cite{GVL:2013}.

Note that the singular values of $\tilde B:=N^\frac12 B N^\frac12$ can be expected to be large since the diagonal scaling $N$ has large entries $n_i$,
the matrix $B$ has only $0$ and $1$ entries and is nonsingular. A precise lower bound on the smallest singular value of $B$ is not available but due to the $0, \;1$ structure it is expected to be $O(1)$ with an acceptable gap to the size of a reasonable perturbation. In practice, $\tilde B$ will have dimension small enough so its singular values can be computed with negligible additional cost to assess its quality in any role extraction algorithm.

Given an ideal adjacency matrix and a minimal $B$, we also consider an
Erd\"{o}s-R\'{e}nyi perturbation model determined by two probabilities $p_{in}$ and
$p_{out}$ \cite{EKYY:2013}.
Elements of $A$ that are $1$ change to $0$ with probability $p_{in}$ and
elements of $A$ that are $0$ change to $1$ with probability $p_{out}$.
Given this Erd\"{o}s-R\'{e}nyi perturbation model, the expected value of $\Delta$ is known~:
\begin{equation*}
E(\Delta) = U_P N^{\frac12} [B(1-p_{in}) + p_{out}(\mathbf{1}\mathbf{1}^T-B)] N^{\frac12}U_P^T
\end{equation*}
which implies that $(U_q^{\perp})^T E(\Delta)=0$ and  $E(\Delta){U_q^\perp} =0$. This then means that the norms
 of $\tilde \Delta_{12}$,  $\tilde \Delta_{21}$ and $\tilde \Delta_{22}$, can be expected to be much smaller that the norm of
 $\Delta$. 
 Moreover, this suggests that we can estimate the $q$ largest singular values of $A+\Delta$ by those of 
\begin{equation*}
E(\tilde B) = \tilde B +  N^{\frac12} [p_{in}B + p_{out}(\mathbf{1}\mathbf{1}^T-B)] N^{\frac12}
\end{equation*}
which are very close to those of $\tilde B$ when $p_{in}$ and $p_{out}$ are small.

For example, for the Erd\"{o}s-R\'{e}nyi graph in Figure~\ref{fig:BlockCycleGraph}, there is a distinct gap between the  $4$th and $5$th singular values for $A$, $S_{\infty}^{1/2}$ and $S_{\infty}$ for the ideal graph case (i.e., rows (a) and (c)). This indicates that the rank of $S_{\infty}$ is $4$, which is the number of roles. For the perturbed graph case, the gap between the  $4$th and $5$th singular values is smaller; however, the gap is larger in $S_{\infty}$ than  in $A$. Also, note that for the perturbed case, the difference between the $4$th and $5$th singular values of $S_{\infty}$ for the large graph is $10^{4}$, while it is $10^{2}$ for the smaller graph. So perturbing small graphs has a larger effect on how well the similarity measure can detect the number of roles than perturbing large graphs. However, in practice we are more interested in extracting structure for large graphs so this is not a major concern.  For both cases it is seen that using the similarity measure to detect the roles in the graph is preferable to using the adjacency matrix because the gap in the singular values is  larger for $S_{\infty}$ than it is for $A$.
Notice that this is intimately related to the fact that we consider unweighted adjacency matrices since 
those are the ones where the Erd\"{o}s-R\'{e}nyi property applies.

\begin{figure}[htbp!]\centering
\begin{subfigure}{0.25\textwidth}\centering
\includegraphics[scale=0.7]{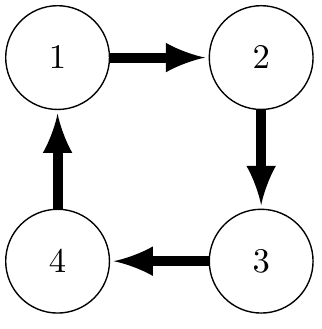}
 \end{subfigure}
\begin{subfigure}{0.25\textwidth}\centering
\begin{equation*}
B = \begin{bmatrix} 0 & 1 & 0 & 0 \\ 0 & 0 & 1 & 0 \\ 0 & 0 & 0 & 1\\ 1 & 0 & 0 & 0\\ \end{bmatrix}
\end{equation*}
 \end{subfigure}

\begin{subfigure}{1\textwidth}\centering
\begin{subfigure}{0.31\textwidth}\centering
\setlength\figurewidth{2.5cm}
\setlength\figureheight{2.5cm}
\pgfplotsset{
	xlabel style = {font = \tiny, yshift=1.15ex},
	ylabel style = {font =\normalsize,  yshift=-1.15ex},
	x tick label style = {font=\scriptsize\color{black}},
	y tick label style = {font=\scriptsize\color{black}},
	title style={font=\footnotesize, yshift=-1.5ex},
	ylabel = {(a)}
	}
%
\definecolor{mycolor1}{rgb}{0.00000,0.44700,0.74100}%
\begin{tikzpicture}

\begin{axis}[%
width=0.95092\figurewidth,
height=\figureheight,
at={(0\figurewidth,0\figureheight)},
scale only axis,
xmin=1,
xmax=10,
ymode=log,
ymin=1e-15,
ymax=1000000,
yminorticks=true,
title = {$\sigma(A)$},
]
\addplot [color=black,mark size=1.7pt,only marks,mark=*,mark options={solid},forget plot]
  table[row sep=crcr]{%
1	200\\
2	141.421356237309\\
3	141.421356237309\\
4	99.9999999999998\\
5	2.08192986419419e-12\\
6	1.77453123322964e-12\\
7	1.56022517173823e-12\\
8	1.44113022280848e-12\\
9	1.36186919200412e-12\\
10	1.15925439729614e-12\\
};
\end{axis}
\end{tikzpicture}%
 \end{subfigure}
\begin{subfigure}{0.3\textwidth}\centering
\setlength\figurewidth{2.5cm}
\setlength\figureheight{2.5cm}
\pgfplotsset{
	xlabel style = {font = \tiny, yshift=1.15ex},
	ylabel style = {font =\tiny},
	x tick label style = {font=\scriptsize\color{black}},
	y tick label style = {font=\scriptsize\color{black}},
	title style={font=\footnotesize, yshift=-1.5ex},
	}
%
\definecolor{mycolor1}{rgb}{0.00000,0.44700,0.74100}%
\begin{tikzpicture}

\begin{axis}[%
width=0.968862\figurewidth,
height=\figureheight,
at={(0\figurewidth,0\figureheight)},
scale only axis,
xmin=1,
xmax=10,
ymode=log,
ymin=1e-15,
ymax=1000000,
yminorticks=true,
title = {$\sigma(S_{\infty}^{1/2})$},
]
\addplot [color=black,mark size=1.7pt,only marks,mark=*,mark options={solid},forget plot]
  table[row sep=crcr]{%
1	382.437241958163\\
2	382.437241958162\\
3	270.423967166897\\
4	270.423967166897\\
5	2.23906952243018e-05\\
6	1.44873762511274e-05\\
7	1.19446927100933e-05\\
8	1.04305309335493e-05\\
9	8.66664716042303e-06\\
10	8.5665826800181e-06\\
};
\end{axis}
\end{tikzpicture}%
\end{subfigure}
~ 
\begin{subfigure}{0.3\textwidth}\centering
\setlength\figurewidth{2.5cm}
\setlength\figureheight{2.5cm}
\pgfplotsset{
	xlabel style = {font = \tiny, yshift=1.15ex},
	ylabel style = {font =\tiny},
	x tick label style = {font=\scriptsize\color{black}},
	y tick label style = {font=\scriptsize\color{black}},
	title style={font=\footnotesize, yshift=-1.5ex},
	}
%
\definecolor{mycolor1}{rgb}{0.00000,0.44700,0.74100}%
\begin{tikzpicture}

\begin{axis}[%
width=0.95092\figurewidth,
height=\figureheight,
at={(0\figurewidth,0\figureheight)},
scale only axis,
xmin=1,
xmax=10,
ymode=log,
ymin=1e-15,
ymax=1000000,
yminorticks=true,
title = {$\sigma(S_{\infty})$},
]
\addplot [color=black,mark size=1.7pt,only marks,mark=*,mark options={solid},forget plot]
  table[row sep=crcr]{%
1	146258.244036566\\
2	146258.244036566\\
3	73129.1220182832\\
4	73129.1220182828\\
5	1.47232257408719e-09\\
6	1.25832148741443e-09\\
7	1.23204556798165e-09\\
8	1.03897886686468e-09\\
9	1.01265598020761e-09\\
10	9.92011217618649e-10\\
};
\end{axis}
\end{tikzpicture}%
\end{subfigure}

\vfill

\begin{subfigure}{0.3\textwidth}\centering
\setlength\figurewidth{2.5cm}
\setlength\figureheight{2.5cm}
\pgfplotsset{
	xlabel style = {font = \tiny, yshift=1.15ex},
	ylabel style = {font =\normalsize,  yshift=-1.15ex},
	x tick label style = {font=\scriptsize\color{black}},
	y tick label style = {font=\scriptsize\color{black}},
	title style={font=\footnotesize, yshift=-1.5ex},
	ylabel = {(b)},
	}
%
\definecolor{mycolor1}{rgb}{0.00000,0.44700,0.74100}%
\begin{tikzpicture}

\begin{axis}[%
width=0.982769\figurewidth,
height=\figureheight,
at={(0\figurewidth,0\figureheight)},
scale only axis,
xmin=1,
xmax=10,
ymode=log,
ymin=0,
ymax=400000,
yminorticks=true,
title = {$\sigma(A)$},
]
\addplot [color=black,mark size=1.7pt,only marks,mark=*,mark options={solid},forget plot]
  table[row sep=crcr]{%
1	215.406742305397\\
2	104.699041707322\\
3	85.1686307254007\\
4	66.3537149730647\\
5	19.4509380239593\\
6	19.3443953236945\\
7	19.0677161156926\\
8	18.9254200934063\\
9	18.8343350848519\\
10	18.7469910867677\\
};
\end{axis}
\end{tikzpicture}%
 \end{subfigure}
\begin{subfigure}{0.3\textwidth}\centering
\setlength\figurewidth{2.45cm}
\setlength\figureheight{2.45cm}
\pgfplotsset{
	xlabel style = {font = \tiny, yshift=1.15ex},
	ylabel style = {font =\tiny},
	x tick label style = {font=\scriptsize\color{black}},
	y tick label style = {font=\scriptsize\color{black}},
	title style={font=\footnotesize, yshift=-1.5ex},
	}
%
\definecolor{mycolor1}{rgb}{0.00000,0.44700,0.74100}%
\begin{tikzpicture}

\begin{axis}[%
width=0.982769\figurewidth,
height=\figureheight,
at={(0\figurewidth,0\figureheight)},
scale only axis,
xmin=1,
xmax=10,
ymode=log,
ymin=0,
ymax=400000,
yminorticks=true,
title = {$\sigma(S_{\infty}^{1/2})$},
]
\addplot [color=black,mark size=1.7pt,only marks,mark=*,mark options={solid},forget plot]
  table[row sep=crcr]{%
1	554.327201940954\\
2	169.674421624819\\
3	132.661778894335\\
4	119.57142313491\\
5	24.089203877101\\
6	23.7232112725706\\
7	23.4286173976938\\
8	23.1809870369831\\
9	23.0701457424965\\
10	23.0143164209567\\
};
\end{axis}
\end{tikzpicture}
\end{subfigure}
\begin{subfigure}{0.3\textwidth}\centering
\setlength\figurewidth{2.5cm}
\setlength\figureheight{2.5cm}
\pgfplotsset{
	xlabel style = {font = \tiny, yshift=1.15ex},
	ylabel style = {font =\tiny},
	x tick label style = {font=\scriptsize\color{black}},
	y tick label style = {font=\scriptsize\color{black}},
	title style={font=\footnotesize, yshift=-1.5ex},
	}
%
\definecolor{mycolor1}{rgb}{0.00000,0.44700,0.74100}%
\begin{tikzpicture}

\begin{axis}[%
width=0.982769\figurewidth,
height=\figureheight,
at={(0\figurewidth,0\figureheight)},
scale only axis,
xmin=1,
xmax=10,
ymode=log,
ymin=0,
ymax=400000,
yminorticks=true,
title = {$\sigma(S_{\infty})$},
]
\addplot [color=black,mark size=1.7pt,only marks,mark=*,mark options={solid},forget plot]
  table[row sep=crcr]{%
1	307278.646811686\\
2	28789.4093537166\\
3	17599.1475794093\\
4	14297.3252305077\\
5	580.289743432522\\
6	562.790753082995\\
7	548.900113167504\\
8	537.358160008774\\
9	532.231624580028\\
10	529.658760323919\\
};
\end{axis}
\end{tikzpicture}
\end{subfigure}
\caption{200, 100, 100, and 200 nodes in each role}
\end{subfigure}

\vfill

\begin{subfigure}{1\textwidth}\centering
\begin{subfigure}{0.31\textwidth}\centering
\setlength\figurewidth{2.5cm}
\setlength\figureheight{2.5cm}
\pgfplotsset{
	xlabel style = {font = \tiny, yshift=1.15ex},
	ylabel style = {font =\normalsize,  yshift=-1.15ex},
	x tick label style = {font=\scriptsize\color{black}},
	y tick label style = {font=\scriptsize\color{black}},
	title style={font=\footnotesize, yshift=-1.5ex},
	ylabel = {(c)}
	}
%
\definecolor{mycolor1}{rgb}{0.00000,0.44700,0.74100}%
\begin{tikzpicture}

\begin{axis}[%
width=0.95092\figurewidth,
height=\figureheight,
at={(0\figurewidth,0\figureheight)},
scale only axis,
xmin=1,
xmax=10,
ymode=log,
ymin=1e-15,
ymax=2000,
yminorticks=true,
title = {$\sigma(A)$},
]
\addplot [color=black,mark size=1.7pt,only marks,mark=*,mark options={solid},forget plot]
  table[row sep=crcr]{%
1	20\\
2	14.142135623731\\
3	14.142135623731\\
4	10\\
5	1.39588404348388e-14\\
6	1.26885171848291e-14\\
7	1.12538161041208e-14\\
8	9.20043615849947e-15\\
9	8.75443058862176e-15\\
10	8.17660993421369e-15\\
};
\end{axis}
\end{tikzpicture}%
 \end{subfigure}
\begin{subfigure}{0.3\textwidth}\centering
\setlength\figurewidth{2.5cm}
\setlength\figureheight{2.5cm}
\pgfplotsset{
	xlabel style = {font = \tiny, yshift=1.15ex},
	ylabel style = {font =\tiny},
	x tick label style = {font=\scriptsize\color{black}},
	y tick label style = {font=\scriptsize\color{black}},
	title style={font=\footnotesize, yshift=-1.5ex},
	}
%
\definecolor{mycolor1}{rgb}{0.00000,0.44700,0.74100}%
\begin{tikzpicture}

\begin{axis}[%
width=0.968862\figurewidth,
height=\figureheight,
at={(0\figurewidth,0\figureheight)},
scale only axis,
xmin=1,
xmax=10,
ymode=log,
ymin=1e-15,
ymax=2000,
yminorticks=true,
title = {$\sigma(S_{\infty}^{1/2})$},
]
\addplot [color=black,mark size=1.7pt,only marks,mark=*,mark options={solid},forget plot]
  table[row sep=crcr]{%
1	38.2437241883514\\
2	38.2437241883514\\
3	27.0423967114113\\
4	27.0423967114113\\
5	5.32435699775986e-07\\
6	5.32435699775986e-07\\
7	3.90523048141949e-07\\
8	3.90523048052908e-07\\
9	3.64838139519924e-07\\
10	3.64838139519924e-07\\
};
\end{axis}
\end{tikzpicture}%
\end{subfigure}
\begin{subfigure}{0.3\textwidth}\centering
\setlength\figurewidth{2.5cm}
\setlength\figureheight{2.5cm}
\pgfplotsset{
	xlabel style = {font = \tiny, yshift=1.15ex},
	ylabel style = {font =\tiny},
	x tick label style = {font=\scriptsize\color{black}},
	y tick label style = {font=\scriptsize\color{black}},
	title style={font=\footnotesize, yshift=-1.5ex},
	}
%
\definecolor{mycolor1}{rgb}{0.00000,0.44700,0.74100}%
\begin{tikzpicture}

\begin{axis}[%
width=0.95092\figurewidth,
height=\figureheight,
at={(0\figurewidth,0\figureheight)},
scale only axis,
xmin=1,
xmax=10,
ymode=log,
ymin=1e-15,
ymax=2000,
yminorticks=true,
title = {$\sigma(S_{\infty})$},
]
\addplot [color=black,mark size=1.7pt,only marks,mark=*,mark options={solid},forget plot]
  table[row sep=crcr]{%
1	1462.5824397947\\
2	1462.5824397947\\
3	731.291219897349\\
4	731.291219897349\\
5	1.05362068823597e-12\\
6	9.82968470157355e-13\\
7	8.17836950257446e-13\\
8	6.14125752965437e-13\\
9	6.05529173119955e-13\\
10	5.23301131306281e-13\\
};
\end{axis}
\end{tikzpicture}%
\end{subfigure}

\vfill

\begin{subfigure}{0.3\textwidth}\centering
\setlength\figurewidth{2.5cm}
\setlength\figureheight{2.5cm}
\pgfplotsset{
	xlabel style = {font = \tiny, yshift=1.15ex},
	ylabel style = {font =\normalsize,  yshift=-1.15ex},
	x tick label style = {font=\scriptsize\color{black}},
	y tick label style = {font=\scriptsize\color{black}},
	title style={font=\footnotesize, yshift=-1.5ex},
	ylabel = {(d)},
	}
%
\definecolor{mycolor1}{rgb}{0.00000,0.44700,0.74100}%
\begin{tikzpicture}

\begin{axis}[%
width=0.982769\figurewidth,
height=\figureheight,
at={(0\figurewidth,0\figureheight)},
scale only axis,
xmin=1,
xmax=10,
ymode=log,
ymin=0,
ymax=4000,
yminorticks=true,
title = {$\sigma(A)$},
]
\addplot [color=black,mark size=1.7pt,only marks,mark=*,mark options={solid},forget plot]
  table[row sep=crcr]{%
1	22.3380940883889\\
2	11.3184227175985\\
3	9.28726896330278\\
4	7.87532484990907\\
5	5.83703140752192\\
6	5.6830193088767\\
7	5.41418590938828\\
8	5.19944057428238\\
9	5.1463225867579\\
10	4.83570734794972\\
};
\end{axis}
\end{tikzpicture}%
 \end{subfigure}
\begin{subfigure}{0.3\textwidth}\centering
\setlength\figurewidth{2.5cm}
\setlength\figureheight{2.5cm}
\pgfplotsset{
	xlabel style = {font = \tiny, yshift=1.15ex},
	ylabel style = {font =\tiny},
	x tick label style = {font=\scriptsize\color{black}},
	y tick label style = {font=\scriptsize\color{black}},
	title style={font=\footnotesize, yshift=-1.5ex},
	}
%
\definecolor{mycolor1}{rgb}{0.00000,0.44700,0.74100}%
\begin{tikzpicture}

\begin{axis}[%
width=0.982769\figurewidth,
height=\figureheight,
at={(0\figurewidth,0\figureheight)},
scale only axis,
xmin=1,
xmax=10,
ymode=log,
ymin=0,
ymax=4000,
yminorticks=true,
title = {$\sigma(S_{\infty}^{1/2})$},
]
\addplot [color=black,mark size=1.7pt,only marks,mark=*,mark options={solid},forget plot]
  table[row sep=crcr]{%
1	56.6133703533563\\
2	17.9441789665624\\
3	14.34794814331\\
4	13.4020708820768\\
5	7.78822191207666\\
6	7.27874613775609\\
7	6.94248507510714\\
8	6.65909075135713\\
9	6.6315185566224\\
10	6.44507666456352\\
};
\end{axis}
\end{tikzpicture}
\end{subfigure}
\begin{subfigure}{0.3\textwidth}\centering
\setlength\figurewidth{2.5cm}
\setlength\figureheight{2.5cm}
\pgfplotsset{
	xlabel style = {font = \tiny, yshift=1.15ex},
	ylabel style = {font =\tiny},
	x tick label style = {font=\scriptsize\color{black}},
	y tick label style = {font=\scriptsize\color{black}},
	title style={font=\footnotesize, yshift=-1.5ex},
	}
%
\definecolor{mycolor1}{rgb}{0.00000,0.44700,0.74100}%
\begin{tikzpicture}

\begin{axis}[%
width=0.982769\figurewidth,
height=\figureheight,
at={(0\figurewidth,0\figureheight)},
scale only axis,
xmin=1,
xmax=10,
ymode=log,
ymin=0,
ymax=4000,
yminorticks=true,
title = {$\sigma(S_{\infty})$},
]
\addplot [color=black,mark size=1.7pt,only marks,mark=*,mark options={solid},forget plot]
  table[row sep=crcr]{%
1	3205.07370276626\\
2	321.993558784018\\
3	205.863615923114\\
4	179.615503928212\\
5	60.6564005517509\\
6	52.9801453378983\\
7	48.1980990180845\\
8	44.34348963481\\
9	43.977038366827\\
10	41.5390132121015\\
};
\end{axis}
\end{tikzpicture}
\end{subfigure}
\caption{20, 10, 10, and 20 nodes in each role}
\end{subfigure}
\caption{Block cycle role structure, associated neighborhood pattern similarity, and $10$ largest singular values of $A$, $S_{\infty}^{1/2}$, and $S_{\infty}$. Rows (a) and (c) are the singular values for an ideal graph. Rows (b) and (d) are the singular values for a perturbed graph.  \label{fig:BlockCycleGraph}}
\end{figure}

\section{Unification of Special Complex Structures as Role Structures}\label{sec:complex}

In this section, we unify special complex network structures as role structures and show why the neighborhood pattern similarity measure will extract these structures. We do this by examining the structure of the associated ideal adjacency matrix, $A$. As in the general discussions of earlier sections,  in efficient role extraction algorithms, these ideal forms would be identified by considering low-rank approximations of the similarity matrices of  non-ideal adjacency matrices.

\subsection{Community Structures}\label{subsect:Community}
A popular type of network structure is a community structure. Community structures are described as groups of nodes where there exists many connections between nodes in the same group and no (or few) connections between nodes in different groups \cite{Newman:2010}.  For the role extraction problem, community structures can be viewed as role structures, where the role matrix $B$ is an identity matrix of dimension equal to the number of roles.  Therefore, community structures are a special case of role structures. Since we assume, in the ideal case, that the intrarole adjacency matrices are cliques, the graph with ideal community structure is unweighted and undirected with a symmetric adjacency matrix $A$.
For example, a three community network can be represented by $A = (PZ) B (PZ)^T,$ where $A$ is an ideal graph with 
\begin{equation*}
B := \begin{bmatrix} 1 & 0 & 0 \\ 0 & 1 & 0 \\ 0 & 0 & 1 \\\end{bmatrix}.
\end{equation*}
Note that matrix $B$ is a minimal role matrix of the adjacency matrix $A$ and that $\rank (B) = 3$, which is the number of roles. Therefore, $\rank (S_{\infty}) = 3$ and we can recover the three communities.

\subsection{Overlapping Community Structures}\label{subsect:OverlapCommuntiy}
Another type of network structure is an overlapping community structure. Community structures emphasize the presence of dependencies inside a group and the absence of dependencies between groups.  However, there may exist nodes that can be placed in multiple communities without significantly altering the value of the cost function being minimized. That is, given two separate communities {\bf A} and {\bf B}, a third group of nodes {\bf C} may be included in either {\bf A} or {\bf B} if the cost function fails to determine a significance of one community over the other~\cite{PDFV:2005,Reichardt:2009}. Therefore, it can be concluded that {\bf A} and {\bf B} are overlapping communities and {\bf C} is the overlap (see Figure~\ref{fig:Overlap2RoleExampleA}).

\begin{figure}[ht!]\centering
\begin{subfigure}{0.33\textwidth}\centering
\includegraphics[scale = 0.7]{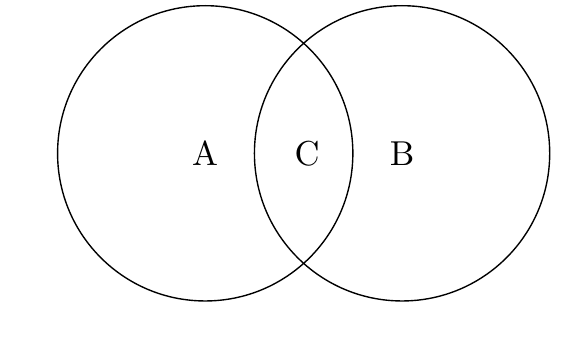}
\caption{Overlapping Community Structure \label{fig:Overlap2RoleExampleA}}
\end{subfigure}
~
\begin{subfigure}{0.3\textwidth}\centering
\includegraphics[scale = 0.14]{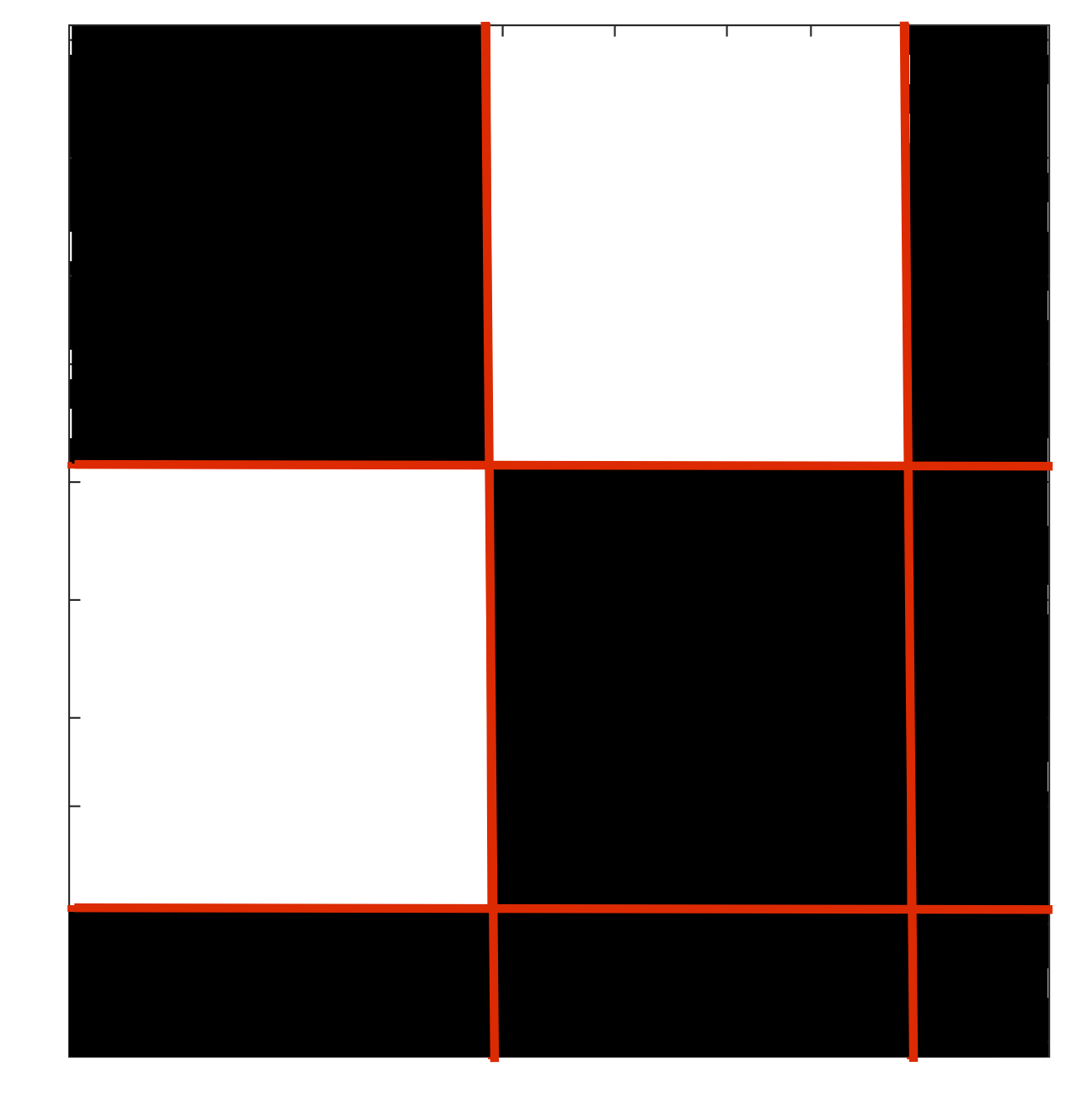}
\caption{Adjacency Matrix \label{fig:Overlap2RoleExampleB}}
\end{subfigure}
~
\begin{subfigure}{0.3\textwidth}\centering
\includegraphics[scale = 0.65]{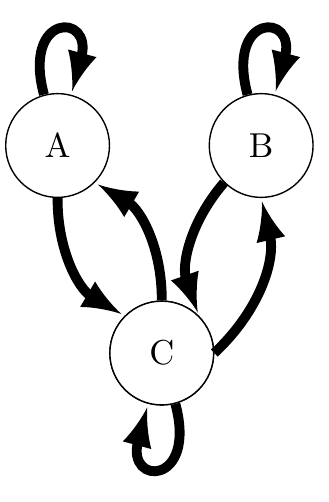}
\caption{Role Structure \label{fig:Overlap2RoleExampleC}}
\end{subfigure}
\caption{Example of two overlapping communities. \label{fig:Overlap2RoleExample}}
\end{figure}

Observe for the adjacency matrix (Figure~\ref{fig:Overlap2RoleExampleB}) that the overlapping community structure in Figure~\ref{fig:Overlap2RoleExampleA} can be represented by the $3$ role structure in Figure~\ref{fig:Overlap2RoleExampleC}. That is, the overlap {\bf C} can be represented by its own role where its role has connections to other nodes within the same role and to nodes in roles {\bf A} and {\bf B}. Also, the nodes in roles {\bf A} and {\bf B} do not  interact with each other and only interact with nodes within the same role or with nodes in role {\bf C}. Lastly, the above role structure is a valid role structure since it satisfies the role constraint that no two roles are structurally equivalent. 

For Figure~\ref{fig:Overlap2RoleExampleC}, assuming $A = (PZ) B (PZ)^T$  is an ideal graph,  the minimal role matrix $B$ is
\begin{equation*}
B := \begin{bmatrix} 1 & 0 & 1 \\ 0 & 1 & 1 \\ 1 & 1 & 1 \\ \end{bmatrix}
\end{equation*}
and $\rank(B) = 3$. Then, $\rank (S_{\infty}) = 3$ and we can recover the overlapping community structure.

In general, a role structure is considered an overlapping community structure when two (or more) roles interact with another role (the overlap role) and with themselves. In addition, the overlap role interacts with the other two (or more) roles and with itself. Also, the matrix $B$ will be a full rank minimal role matrix.

\subsection{Bipartite Networks and Communities}\label{subsect:Bipartite}
A common network in many applications, such as biological networks, is a bipartite network. A bipartite network is a set of nodes decomposed into two disjoint sets (say, of dimensions $n_1$ and $n_2$) such that no two nodes within the same set are adjacent \cite{Newman:2010}. For an appropriate ordering of the nodes, the adjacency matrix  $A$ can then be partitioned as
$$  A=\left[\begin{array}{cc}  0_{n_1} & A_{12} \\ A_{21} & 0_{n_2}  \end{array}\right].
$$
One easily checks that the similarity matrices $S_k$ are then block diagonal and of the form
$$
S_k= \left[\begin{array}{cc}  S_{11} & 0_{n_1,n_2} \\ 0_{n_2,n1} & S_{22} \end{array}\right].
$$
We are particularly interested in ideal bipartite networks where the communities are bipartite cliques (a clique with the
edges within each part removed).
Figure~\ref{fig:ExampleBipartiteRMP} is an example of a bipartite clique and its corresponding adjacency matrix. Observe that this bipartite clique has a very simple role structure (see Figure~\ref{fig:ExampleBipartiteRole}). This bipartite clique
has an ideal adjacency matrix, $A = (PZ)B(PZ)^T$ , with the minimal role matrix $B$ 
\begin{equation*}
B = \begin{bmatrix} 0 & 1 \\ 1 & 0 \\ \end{bmatrix},
\end{equation*}
and $\rank(B) = 2$. It follows that the similarity matrices, $S_k$ with $k \ge 1$, for this bipartite clique have $\rank(S_k) = 2$ and we recover the two roles from the role matrix.

 \begin{figure}[htb!]\centering
 \begin{subfigure}{0.33\textwidth}\centering
 \includegraphics[scale=0.5]{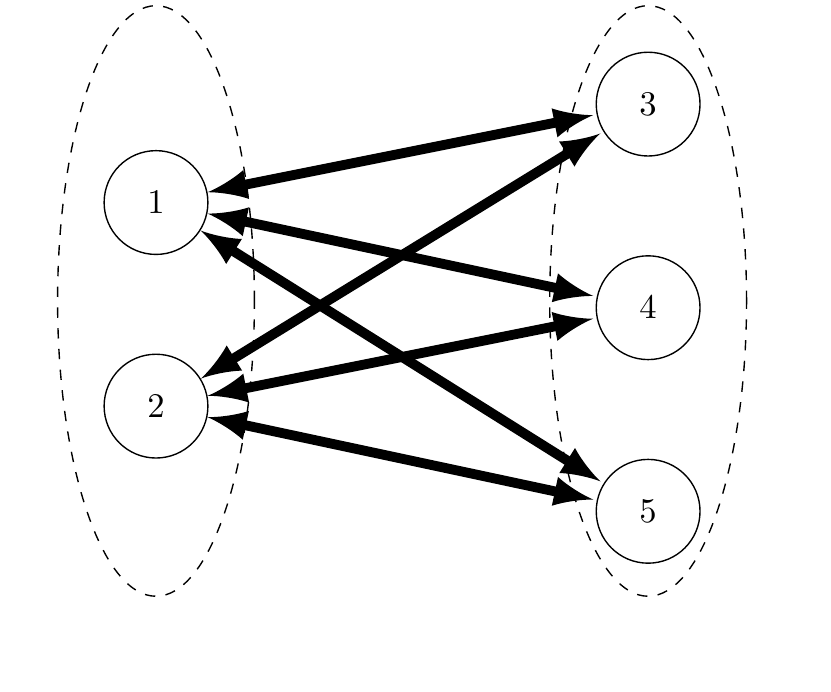}
  \caption{Bipartite clique \label{fig:ExampleBipartite}
} 
  \end{subfigure}
  ~ 
  \begin{subfigure}{0.30\textwidth}\centering
\begin{equation*}
A = \left[\begin{array}{cc|ccc}
0 & 0  & 1 & 1 & 1 \\ 
0 & 0  & 1 & 1 & 1 \\ \hline
1 & 1 & 0 & 0 & 0 \\
1 & 1 & 0 & 0 & 0 \\
1 & 1 & 0 & 0 & 0 \\ \end{array}\right]
\end{equation*}
\caption{Adjacency matrix of the bipartite clique\label{fig:ExampleBipartiteImageMatrix}} 
  \end{subfigure}
  ~ 
  \begin{subfigure}{0.30\textwidth}\centering
 \begin{equation*}
B = \left[\begin{array}{c|c}
0 & 1 \\  \hline
1 & 0  \end{array}\right]
\end{equation*}
  \caption{Role Structure} \label{fig:ExampleBipartiteRole}
  \end{subfigure}
 \caption{Example of a bipartite clique.  \label{fig:ExampleBipartiteRMP}}
\end{figure}

A set of bipartite communities is a bipartite network that can be reordered as a collection of bipartite cliques, each considered as one bipartite community. For example, the role matrix 
\begin{equation*}
B = \left[\begin{array}{ccc|ccc}
0 & 0 & 0 & 1 & 0 & 0 \\
0 & 0 & 0 & 0 & 1 & 0 \\
0 & 0 & 0 & 0 & 0 & 1 \\ \hline
1 & 0 & 0 & 0 & 0 & 0 \\
0 & 1 & 0 & 0 & 0 & 0 \\
0 & 0 & 1 & 0 & 0 & 0 \\ \end{array}\right]
\end{equation*}
has three bipartite communities. Also, observe that $B$ has $6$ roles  and is a minimal image matrix with $\rank(B)= 6$. Therefore, $\rank(S_k) = 6$ and we can recover the $6$ role structure, i.e., $3$ bipartite communities.

In general, a network with $q$ bipartite communities has a $2q\times 2q$ minimal role matrix $B$ with $0$  in the two $q \times q$ diagonal blocks and a $q\times q$ identity matrix in the two off-diagonal blocks.
The $q$ bipartite community network has $2q$ roles and since $B$ is symmetric,  $\rank(B) = \rank(S_k) = 2q$, which is equal to the number of roles. Therefore, the similarity matrix can be used to determine the role structure of bipartite communities. 
Since each community is based on a bipartite clique, as with communities in Section~\ref{subsect:Community},
this implies that the ideal adjacency matrix is symmetric.


\subsection{Signed Networks}
A signed unweighted directed graph is denoted  \\
$G(V, E^{-}, E^{+})$,
where $E^{-}\subseteq V\times V$ are the negative edges, $E^{+} \subseteq V\times V$ are the positive edges, and no edge can be both positive and negative (i.e., $E^{-} \cap E^{+} = \emptyset$) \cite{DBF:2005}. The associated signed unweighted adjacency matrix is defined in terms of its elements $A_{i, j}$ by
\begin{equation*}
A_{i, j} = \begin{cases} -1, & \text{ if } (i, j) \in E^{-}, \\ 
1, & \text{ if } (i, j) \in E^{+}, \\ 
0, & \text{ otherwise.} \end{cases}
\end{equation*}

Networks with positive and negative edge weights include social analysis networks and recommender networks~\cite{WF:1994, DBF:2005,Newman:2010,Traag:2014}. In such networks negative edges often denote a dislike towards a person, place or thing. 
Of interest here are the set of signed unweighted graphs that are ``checkerboard".
\begin{definition}\label{def:checkbrd}
A signed unweighted directed graph $G$ with signed adjacency matrix $A$ has a checkerboard pattern if there exists a diagonal sign matrix $Q \in \{0, \pm 1\}^{n \times n}$ satisfying $Q^2=I$ such that $ | A | = QAQ$.
\end{definition}
This is equivalent to there being a permutation matrix $P$ such that $PAP^T$ can be partitioned into a $2 \times 2$ block
structure where the two diagonal blocks contain only $0$ and $1$ elements and the two off-diagonal blocks contain
only $0$ and $-1$ elements.

A simple (partitioned) example of a checkerboard adjacency matrix, is
\begin{equation} \label{signed} 
A= \left[\begin{array}{r|rr|r|rr}
0 & 0 & -1 & 0  & 0 & 0 \\ \hline
0 & 0 & 1  & 0  & 0 & 0 \\ 
0 & 0 & 0  & -1 & 1 & 1 \\ \hline
1 & -1 & 0 & 0  & 0 & 0 \\ \hline
-1 & 1 & 0 & 0  & 0 & 0 \\
-1 & 1 & 0 & 0  & 0 & 0 
 \end{array}\right] , \;  QAQ=|A|= \left[\begin{array}{rrrrrr}
0 & 0 & 1 & 0  & 0 & 0 \\ 
0 & 0 & 1 & 0  & 0 & 0 \\
0 & 0 & 0 & 1  & 1 & 1 \\
1 & 1 & 0 & 0  & 0 & 0 \\
1 & 1 & 0 & 0  & 0 & 0 \\
1 & 1 & 0 & 0  & 0 & 0 
 \end{array}\right] ,
\end{equation}
where $Q :=\diag(1, -I_2, 1, -I_2)$ also indicates the checkerboard partitioning. 

Checkerboard matrices are related to so-called socially balanced networks, which were introduced by Heider in~\cite{Heider:1946,Heider:1958} and later analyzed by Cartwright and Haray in~\cite{CH:1956}. Further discussion of balanced networks and checkerboard graphs can be found in~\cite{WF:1994, DBF:2005, Traag:2014, Marchand:2017}. 

There is a simple relationship between the sequence of matrices $S_k$ defining the similarity matrix for a checkerboard signed adjacency matrix $A$ and the sequence of matrices $\tilde{S}_k$ defining
the similarity matrix of adjacency matrix $\tilde{A}=|A|$.  Specifically, if 
$QAQ=|A|=\tilde{A}$ for a diagonal sign matrix $Q$, satisfying $Q^2=I$, it then follows
from \eqref{iterate} that $\tilde{S}_k=QS_kQ=|S_k|$. This implies that the ranks and singular values of $S_k$ and $|S_k|$ are the same and that for the low-rank approximation one can as well consider the iteration matrices $|S_k|$ for the unsigned adjacency matrix $|A|$.

For checkerboard signed ideal adjacency matrices with the rank factorization
$A=ZBZ^T$ where $B\in \{0, 1\}^{q\times q}$ is a minimal role matrix 
and $Z\in \{0, 1, -1\}^{q\times q}$, i.e., the signs are placed in elements of $Z$,
the unsigned graph $|A|=QAQ$ also has (unsigned) ideal structure. This is easily seen
by considering the relevant factorizations of $A$ and $|A|$.
Suppose $A = ZBZ^T$, where $B \in \{0, \;1\}^{q \times q}$ is a minimal role matrix,
and $Z \in \{0, \;1,\;-1\}^{n \times q}$ where each row has exactly one nonzero element.
A diagonal sign matrix $Q$ satisfying $Q^2=I$ is easily constructed using the signs
of the single nonzero in each row of $Z$ so that $|A|=QAQ$ and $\tilde{Z} = QZ$ has a single
nonzero equal to $1$ in each row.  The unsigned ideal factorization of $|A|$ is therefore
$|A|=\tilde{Z} B \tilde{Z}^T$.

Assuming $A$ is checkerboard, i.e., a $Q$ is known, the other direction is also easily deduced.
If $|A|=QAQ=\tilde{Z} B \tilde{Z}^T$ is an unsigned ideal factorization with minimal $B$ then 
$
A=Q|A|Q=Q\tilde{Z} B \tilde{Z}^TQ^T= Z B Z^T
$
where $Z$ has a single
nonzero equal to $1$ or $-1$ in each row. So a checkerboard $A$ has  signed ideal form if and only if
$|A|$ has unsigned ideal form.

For \eqref{signed} we have 
$A= Z B Z^T$, $|A|= \tilde{Z} B \tilde{Z}^T$,
$\rank(A) = \rank(B)= q = 3$ and
$$Z = 
\left[\begin{array}{rrr}
1 & 0 & 0 \\
-1 & 0 & 0 \\
0 & -1 & 0 \\
0 & 0 & 1 \\
0 & 0 & -1 \\
0 & 0 & -1 
\end{array}\right],\quad 
B=
\begin{bmatrix}
0 & 1 & 0 \\ 0 & 0 & 1 \\ 1 & 0 & 0
\end{bmatrix}
\quad \tilde{Z}=QZ = 
\begin{bmatrix}
1 & 0 & 0 \\
1 & 0 & 0 \\
0 & 1 & 0 \\
0 & 0 & 1 \\
0 & 0 & 1 \\
0 & 0 & 1 
\end{bmatrix}.
$$

Since the vectors $z_i$ that define $Z$ and the node-to-role mapping may contain both $+1$ and $-1$, $B$ does not reflect the mixed sign checkerboard of $A$. If each role for which $z_i$ has both signs is split into two roles then a mixed sign checkerboard generalized role matrix, $\hat{B}$, is defined with dimension no larger than $2q \times 2q$. For this $A$, $\hat{B}=\hat{Z}B \hat{Z}^T$ has $5$ signed roles with
$$\hat{Z} = 
\left[\begin{array}{rrr}
1 & 0 & 0 \\
-1 & 0 & 0 \\
0 & -1 & 0 \\
0 & 0 & 1 \\
0 & 0 & -1 
\end{array}\right],\quad 
B=
\begin{bmatrix}
0 & 1 & 0 \\ 0 & 0 & 1 \\ 1 & 0 & 0
\end{bmatrix},
\quad
\hat{B}=
\left[\begin{array}{rrrrr}
0 & 0 & -1 & 0 & 0 \\
0 & 0 &  1 & 0 & 0 \\
0 & 0 &  0 & -1& 1 \\
1 & -1 &  0 & 0& 0 \\
-1 & 1 &  0 & 0& 0 
\end{array}\right].
$$

\subsection{Weighted Graphs}
In practice, many networks have edge weights in their graphs. Much of the theory developed above can be applied to a weighted matrix $W$ that is symmetric and rank one, i.e. $W=d d^T$. This is an example of a weighted adjacency matrix $A_W=W\circ A$ (where $\circ$ denotes the elementwise matrix product) that can be rewritten as $A_W:= DAD$, where $D=\diag(d_1, \ldots , d_n)$ and $A$ is the unweighted adjacency matrix.
For such weighted graphs, the adjacency matrix of the ideal graph case becomes 
$A_W=DAD=(DPZ)B(DPZ)^T$, with $B\in \left\{0,1\right\}^{q\times q}$.
If we use the permuted weight matrix $D_P=P^TDP$ and the corresponding scaled matrix $Z_D:=D_PZ$, we obtain a decomposition of the same type as for the unweighted case and with the same matrix $B$~:
\begin{equation*}
A_W=DAD=(DPZ)B(DPZ)^T= (PZ_D)B(PZ_D)^T = P(Z_DBZ_D^T)P^T. 
\end{equation*}
This shows that we should also be able to associate similarity matrices $S_k^D$ to a weighted matrix $A_W$.  
Since the effectiveness of the similarity matrices $S_k$ depends on the connection between adjacency matrices and Erd\"{o}s-R\'{e}nyi graphs, we want to maintain this connection in the scaled similarity matrices $S^D_k$, i.e. $S^D_k=DS_kD$. One then finds that the corresponding 
recurrences for the matrices  $S^D_k$ are given by
$$
S_1^D:= \left[\begin{array}{cc} A_W & A_W^T \end{array}\right]\left[\begin{array}{cc} D^{-2} & 0 \\ 0 & D^{-2}\end{array}\right] \left[\begin{array}{cc} A_W^T \\ A_W \end{array}\right], $$ 
$$S_{k+1}^D = \left[\begin{array}{cc} A_W & A^T_W \end{array}\right]
\left[\begin{array}{cc}  D^{-2}+\beta^2 D^{-2}S_k^D D^{-2} & 0 \\ 0 & D^{-2}+\beta^2 D^{-2}S_k^D D^{-2}   \end{array}\right]\left[\begin{array}{cc} A^T_W \\ A_W \end{array}\right].$$
The singular values of the similarity matrices $S_k^D=DS_kD$ are clearly changing, but the rank of the similarity matrix is unchanged and the recovery of the roles is the same as for $S_k$. 

\begin{remark}
Notice that the adjacency matrix, $A$, of a  checkerboard graph can also be considered as a weighted matrix $D|A|D$, where $|A|$ is its
(unsigned) adjacency matrix, and $D$ is the diagonal sign matrix making it signed and checkerboard. \hfill $\Box$
\end{remark}

\section{Conclusion and Future Work}
In recent years, the role extraction problem has become popular as researchers have determined a general definition of roles and have developed algorithms to find role structures within networks. In this paper, we explored analytically why a recent indirect approach using the neighborhood pattern similarity measure is able extract role structures from networks, without using any a priori knowledge of the network.

For our analysis, we first focused on an ideal graph case with a minimal role matrix and showed how the role structure can be extracted from the low-rank factorization of the similarity by clustering the rows of the low-rank factor. We then analyzed the perturbed graph case and how adding or subtracting elements in the adjacency matrix changes the singular values of the adjacency matrix and the similarity matrix.

Lastly, we unified some special complex networks structures as role structures by constructing their image matrices and showing how these matrices are minimal. From our analysis of the similarity matrix, the indirect approach using the neighborhood pattern similarity measure is able to extract these structures from the network. The unification of these structures is important because it allows us to use one approach to extract any structure without any a prior knowledge of the network. For example, community detection algorithms assumed that the network can be grouped into communities. However, for some networks, there may exist overlapping community structures, which the algorithms would fail to find. 

This paper focused on the theoretical analysis of the neighborhood pattern similarity measure with respect to the role extraction problem. A forthcoming paper will explore the efficiency of this indirect approach compared to other indirect and direct graph partitioning and role extraction  algorithms.

\section*{Acknowledgment}
Part of this work was performed while the second author was a visiting professor at UC Louvain, funded by the Science and Technology Sector, and with additional support by the Netherlands Organization for Scientific Research. 
This work was also supported by the US National Science Foundation under grants DBI 1262476 and CIBR 1934157. Wen Huang was partially supported by the Fundamental Research Funds for the Central Universities (No. 20720190060).

\bibliographystyle{imaiai}

\begin{thebibliography}{99}

\bibitem{AMC:2010}
\textsc{Anteneodo, C., Malmgren, R.~D.  {\small \&} Chialvo, D.~R.}  (2010)
  Poissonian bursts in e-mail correspondence. \emph{The European Physical
  Journal B}, \textbf{75}(3), 389--394.

\bibitem{BDVB:2013}
\textsc{Beguerisse-D\'{\i}az, M., Vangelov, B.  {\small \&} Barahona, M.}  (2013)
  Finding role communities in directed networks using Role-Based Similarity,
  Markov Stability and the Relaxed Minimum Spanning Tree. in \emph{Global
  Conference on Signal and Information Processing (GlobalSIP), 2013 IEEE}, pp.
  937--940.

\bibitem{BGHSVD:2004}
\textsc{Blondel, V.~D., Gajardo, A., Heymans, M., Senellart, P.  {\small \&}
  Van~Dooren, P.}  (2004) A Measure of Similarity between Graph Vertices:
  Applications to Synonym Extraction and Web Searching. \emph{{SIAM} Review},
  \textbf{46}(4), 647--666.

\bibitem{BRGGHNW:2008}
\textsc{Brandes, U., Delling, D., Gaertler, M., G\"{o}rke, R., Hoefer, M.,
  Nikoloski, Z.  {\small \&} Wagner, D.}  (2008) On Modularity Clustering.
  \emph{IEEE Transactions on Knowledge and Data Engineering}, \textbf{20}(2),
  172--188.

\bibitem{BRGGHNW:2006}
\textsc{Brandes, U., Delling, D., Gaertler, M., G\"{o}rke, R., Hoefer, M.,
  Nikoloski, Z.  {\small \&} Wagner, D.}  (2006) On Modularity --
  NP-Completeness and Beyond, Internal Tech. Report 2006-19, Faculty of Informatics, 
  Universitat Karlsruhe.

\bibitem{Browet:2014}
\textsc{Browet, A.}  (2014) Algorithms for community and role detection in
  networks. Ph.D. thesis, Universit\'{e} Catholique de Louvain, Department of
  Mathematical Engineering.

\bibitem{BVD:2014}
\textsc{Browet, A.  {\small \&} Van~Dooren, P.}  (2014) Low-rank Similarity
  Measure for Role Model Extraction. in \emph{In Proceedings of the 21st
  International Symposium on Mathematical Theory of Networks and Systems}, pp.
  1412 -- 1418.

\bibitem{CH:1956}
\textsc{Cartwright, D.  {\small \&} Harary, F.}  (1956) Structural balance: a
  generalization of Heider's theory. \emph{Psychological Review},
  \textbf{63}(5), 277 -- 293.

\bibitem{Cason:2012}
\textsc{Cason, T.~P.}  (2012) Role Extraction in Networks. Ph.D. thesis,
  Universit\'{e} Catholique de Louvain, Department of Mathematical Engineering.


\bibitem{CLVD:2016}
\textsc{Cheng, S., Laurent, A.  {\small \&} Van~Dooren, P.}  (2016) Role model
  detection using low rank similarity matrix. \emph{ArXiv:1702.06154v1}.

\bibitem{CB:2011}
\textsc{Cooper, K.  {\small \&} Barahona, M.}  (2011) Role-similarity based
  comparison of directed networks. \emph{ArXiv:arXiv:1103.5582}.

\bibitem{Denayer:2012}
\textsc{Denayer, D.}  (2012) Mod\'{e}les par r\^{o}les de grands graphes. Master's
  thesis, Universit\'{e} Catholique de Louvain, Institute of Information and
  Communtication Technologies, Electronics and Applied Mathematics.

\bibitem{DBF:2005}
\textsc{Doreian, P., Batagelj, V.  {\small \&} Ferligoj, A.}  (2005)
  \emph{Generalized Blockmodeling}. Cambridge University Press.

\bibitem{EKYY:2013}
\textsc{Erd\"{o}s, L., Knowles, A., Yau, H.-T.  {\small \&} Yin, J.}  (2013)
  Spectral statistics of Erd\"{o}s-R\'{e}nyi graphs I: Local semicircle law.
  \emph{Ann. Probab.}, \textbf{41}(3B), 2279--2375.

\bibitem{EB:1994}
\textsc{Everett, M.~G.  {\small \&} Borgatti, S.~P.}  (1994) Regular
  equivalence: General theory. \emph{The Journal of Mathematical Sociology},
  \textbf{19}(1), 29--52.

\bibitem{EB:1996}
\textsc{\BySame{}}  (1996) Exact colorations of graphs and digraphs.
  \emph{Social Networks}, \textbf{18}(4), 319 -- 331.

\bibitem{Fortunato:2010}
\textsc{Fortunato, S.}  (2010) Community detection in graphs. \emph{Physics
  Reports}, \textbf{486}(3--5), 75 -- 174.

\bibitem{GVL:2013}
\textsc{Golub, G.~H.  {\small \&} VanLoan, C.~F.}  (2013) \emph{Matrix
  Computations}. Johns Hopkins Studies in the Mathematical Sciences, Johns
  Hopkins University Press, Baltimore, 4th edn.

\bibitem{GSSPLNA:2010}
\textsc{Guimer\`{a}, R., Stouffer, D.~B., Sales-Pardo, M., Leicht, E.~A.,
  Newman, M. E.~J.  {\small \&} Amaral, L. A.~N.}  (2010) Origin of
  compartmentalization in food webs. \emph{Ecology}, \textbf{91}(10),
  2941--2951.

\bibitem{Heider:1946}
\textsc{Heider, F.}  (1946) Attitudes and Cognitive Organization. \emph{The
  Journal of Psychology}, \textbf{21}(1), 107--112.

\bibitem{Heider:1958}
\textsc{\BySame{}}  (1958) \emph{The Psychology of Interpersonal Relations}.
  New York: Wiley.

\bibitem{HKYH:2002}
\textsc{Holme, P., Kim, B.~J., Yoon, C.~N.  {\small \&} Han, S.~K.}  (2002)
  Attack vulnerability of complex networks. \emph{Phys. Rev. E}, \textbf{65},
  056109.

\bibitem{KKKR:2002}
\textsc{Kim, J., Krapivsky, P.~L., Kahng, B.  {\small \&} Redner, S.}  (2002)
  Infinite-order percolation and giant fluctuations in a protein interaction
  network. \emph{Phys. Rev. E}, \textbf{66}, 055101.

\bibitem{LHN:2006}
\textsc{Leicht, E.~A., Holme, P.  {\small \&} Newman, M. E.~J.}  (2006) Vertex
  similarity in networks. \emph{Phys. Rev. E}, \textbf{73}, 026120.

\bibitem{LW:1971}
\textsc{Lorrain, F.  {\small \&} White, H.~C.}  (1971) Structural equivalence
  of individuals in social networks. \emph{The Journal of Mathematical
  Sociology}, \textbf{1}(1), 49--80.

\bibitem{Marchand:2017}
\textsc{Marchand, M.}  (2017) Low-rank {R}iemannian optimization approach to
  the role extraction problem. Ph.D. thesis, Florida State University,
  Department Mathematics.

\bibitem{Marchand:2016}
\textsc{Marchand, M., Huang, W., Browet, A., Van~Dooren, P. {\small \&} Gallivan, K.~A.}  (2016) A {R}iemannian optimization approach to role model extraction.
\emph{In Proceedings of the 22nd
  International Symposium on Mathematical Theory of Networks and Systems}, pp.
  58--64.

\bibitem{Newman:2010}
\textsc{Newman, M. E.~J.}  (2010) \emph{Networks: An Introduction}. Oxford
  University Press, Inc., New York, NY, USA.

\bibitem{PBMW:1999}
\textsc{Page, L., Brin, S., Motwani, R.  {\small \&} Winograd, T.}  (1999) The
  PageRank Citation Ranking: Bringing Order to the Web.. Technical Report
  1999-66, Stanford InfoLab, Previous number SIDL-WP-1999-0120.

\bibitem{PDFV:2005}
\textsc{Palla, G., Der\'{e}nyi, I., Farkas, I.  {\small \&} Vicsek, T.}  (2005)
  Uncovering the overlapping community structure of complex networks in nature
  and society. \emph{Nature}, \textbf{435}, 814 -- 818.

\bibitem{PSR:2010}
\textsc{Pinkert, S., Schultz, J.  {\small \&} Reichardt, J.}  (2010) Protein
  Interaction Networks -- More Than Mere Modules. \emph{PLOS Computational
  Biology}, \textbf{6}(1), 1--13.

\bibitem{POM:2009}
\textsc{Porter, M.~A., Onnela, J.-P.  {\small \&} Mucha, P.~J.}  (2009)
  Communities in networks. \emph{Notices of the AMS}, \textbf{56}(9), 1082 --
  1097.

\bibitem{Reichardt:2009}
\textsc{Reichardt, J.}  (2009) \emph{Structure in Complex Networks}. Springer,
  Lecture Notes in Physics.

\bibitem{RW:2007}
\textsc{Reichardt, J.  {\small \&} White, R.~D.}  (2007) Role models for
  complex networks. \emph{The European Physical Journal B}, \textbf{60}(2),
  217--224.

\bibitem{Traag:2014}
\textsc{Traag, V.}  (2014) \emph{Algorithms and Dynamical Models for
Communities and Reputation in Social Networks}. Springer, Heidelberg.

\bibitem{WF:1994}
\textsc{Wasserman, S.  {\small \&} Faust, K.}  (1994) \emph{Social Network
  Analysis: Methods and Applications}. Cambridge University Press.

\bibitem{WR:1983}
\textsc{White, D.~R.  {\small \&} Reitz, K.~P.}  (1983) Graph and semigroup
  homomorphisms on networks of relations. \emph{Social Networks},
  \textbf{5}(2), 193 -- 234.

\end{thebibliography}
\ifx\undefined\BySame
\newcommand{\BySame}{\leavevmode\rule[.5ex]{3em}{.5pt}\ }
\fi
\ifx\undefined\textsc
\newcommand{\textsc}[1]{{\sc #1}}
\newcommand{\emph}[1]{{\em #1\/}}
\let\tmpsmall\small
\renewcommand{\small}{\tmpsmall\sc}
\fi


\newpage

\section*{Appendix A}
Consider the formula \eqref{lambdak} for the singular values of  $S_k^\frac12$, 
\begin{equation} \label{A1}
 [\lambda_i^{(k)}]^2=\lambda_i^2\sum_{\ell =1}^k [\beta \lambda_i]^{2(\ell-1)}=\lambda_i^2\frac{(1-[\beta \lambda_i]^{2k})}{(1-[\beta \lambda_i]^{2})} ,
\end{equation}
where we used the simplified notation $\lambda_i:= \lambda_i^{(1)}$ for the singular values of  $S_1^\frac12$. 
If $\lambda_i$ and  $\lambda_j$ are two singular values of  $S_1^\frac12$  satisfying $\lambda_i > \lambda_j$, then we prove that 
$\frac{\lambda_i^{(k+1)}}{\lambda_j^{(k+1)}} >  \frac{\lambda_i^{(k)}}{\lambda_j^{(k)}},   \; \forall k \ge 1.$ Because of \eqref{A1}, we need  to show that 
$$ \frac{\sum_{\ell =1}^{k+1} [\beta \lambda_i]^{2(\ell-1)}}{\sum_{\ell =1}^{k+1} [\beta \lambda_j]^{2(\ell-1)}} =
\frac{\sum_{\ell =1}^k [\beta \lambda_i]^{2(\ell-1)}+[\beta\lambda_i]^{2k}}{\sum_{\ell =1}^k [\beta \lambda_j]^{2(\ell-1)}+[\beta\lambda_j]^{2k}}>
\frac{\sum_{\ell =1}^k [\beta \lambda_i]^{2(\ell-1)}}{\sum_{\ell =1}^k [\beta \lambda_j]^{2(\ell-1)}},
$$
and this is satisfied if and only if
$$ 
\frac{[\beta\lambda_i]^{2k}}{[\beta\lambda_j]^{2k}}>
\frac{\sum_{\ell =1}^k [\beta \lambda_i]^{2(\ell-1)}}{\sum_{\ell =1}^k [\beta \lambda_j]^{2(\ell-1)}},
$$
which is equivalent to 
$$ 
\frac{\sum_{\ell =1}^k [\beta \lambda_j]^{2(\ell-1)}}{[\beta\lambda_j]^{2k}}>
\frac{\sum_{\ell =1}^k [\beta \lambda_i]^{2(\ell-1)}}{[\beta\lambda_i]^{2k}},
$$
and to
$$ 
\sum_{\ell =1}^k [\beta \lambda_j]^{-2\ell}>
\sum_{\ell =1}^k [\beta \lambda_i]^{-2\ell}.
$$
This last inequality follows from 
$$  \sum_{\ell =1}^k [\beta \lambda_j]^{-2\ell}- \sum_{\ell =1}^k [\beta \lambda_i]^{-2\ell} \ge  [\beta \lambda_j]^{-2k}-[\beta \lambda_i]^{-2k}>0.
$$ 

In order to assess the influence of the parameter $\beta$ we look at the gap between the ratios  $\frac{\lambda_i^{(\infty)}}{\lambda_j^{(\infty)}}$ and  $\frac{\lambda_i^{(1)}}{\lambda_j^{(1)}}$. It follows from \eqref{A1} that
$$  \left[\frac{\lambda_i^{(\infty)}}{\lambda_j^{(\infty)}}\right]^2 = \left[\frac{(1-[\beta\lambda_j]^2)}{(1-[\beta\lambda_i]^2)}\right] \left[\frac{\lambda_i^{(1)}}{\lambda_j^{(1)}}\right]^2,
$$
where the scaling factor $\frac{(1-[\beta\lambda_j]^2)}{(1-[\beta\lambda_i]^2)}$ is larger than 1 if $\lambda_i>\lambda_j$ and its derivative versus $\beta$ is positive as long as $\beta\lambda_i < 1$, indicating that it grows with $\beta$.

\end{document}